\documentclass[sigconf,nonacm]{acmart}

\usepackage{url}
\usepackage{nicefrac}
\usepackage{makecell}
\usepackage[ruled,vlined]{algorithm2e}
\usepackage{multirow}
\usepackage{amsfonts}
\usepackage{enumitem}
\usepackage{subcaption}
\usepackage{bm}

\newtheorem{theorem}{Theorem}[section]

\usepackage{colortbl}  
\newcommand{\timerow}{\rowcolor{gray!12}\selectfont}  
\newcommand{\spara}[1]{\smallskip\noindent{\bf #1}}

\newenvironment{squishlist}
{\begin{list}{$\bullet$}
 {\setlength{\itemsep}{0pt}
  \setlength{\parsep}{1pt}
  \setlength{\topsep}{2pt}
  \setlength{\partopsep}{0pt}
  \setlength{\leftmargin}{1.5em}
  \setlength{\labelwidth}{1em}
  \setlength{\labelsep}{0.5em}}}
{\end{list}}

\AtBeginDocument{%
  }

\copyrightyear{2026}
\acmYear{2026}
\setcopyright{cc}
\setcctype{by}
\acmConference[CIKM '26]{Proceedings of the 35th ACM International Conference on Information and Knowledge Management}{November 07--11, 2026}{Rome, Italy}
\acmBooktitle{Proceedings of the 35th ACM International Conference on Information and Knowledge Management (CIKM '26), November 07--11, 2026, Rome, Italy}
\acmDOI{10.1145/3799682.3840960}
\acmISBN{979-8-4007-2539-5/2026/11}

\begin{document}

\title{JECHO: Scalable Echo Chamber Detection via Jaccard-based Homophily and Seed Expansion}
\thanks{Accepted to CIKM 2026.}
\author{Ali Safarpoor Dehkordi}
\orcid{0009-0006-3876-2926}
\affiliation{%
  \institution{The Australian National University}
  \city{Canberra}
  \state{Australian Capital Territory}
  \country{Australia}
}
\email{ali.safarpoordehkordi@anu.edu.au}

\author{Atsushi Miyauchi}
\orcid{0000-0002-6033-6433}
\affiliation{%
  \institution{Intesa Sanpaolo}
  \city{Turin}
  \country{Italy}
}
\email{atsushi.miyauchi@intesasanpaolo.com}

\author{Francesco Bonchi}
\orcid{0000-0001-9464-8315}
\affiliation{%
  \institution{Intesa Sanpaolo AI Research}
  \city{Turin}
  \country{Italy}
}
\email{francesco.bonchi@intesasanpaolo.com}

\author{Ahad N. Zehmakan}
\correspondingauthor
\orcid{0000-0002-8569-6347}
\affiliation{%
  \institution{The Australian National University}
  \city{Canberra}
  \state{Australian Capital Territory}
  \country{Australia}
}
\email{ahadn.zehmakan@anu.edu.au}

\renewcommand{\shortauthors}{Safarpoor Dehkordi et al.}

\begin{abstract}
Detecting echo chambers is critical for understanding and limiting negative social phenomena, such as online polarization, misinformation, and conspiracy theory diffusion. However, the echo chamber detection (ECD) problem yet lacks a rigorous algorithmic foundation. We address this gap by formalizing a unified definition of echo chambers based on three principles: internal opinion homogeneity, opinion extremism, and structural isolation. Under such a definition, we establish the theoretical hardness of the ECD problem, proving it is NP-hard via a reduction from the conductance minimization problem. To circumvent this computational barrier, we derive a theoretical lower bound on the Jaccard-based homophily (JHO) of nodes that reside within valid echo chambers. This guarantee motivates JECHO, a novel algorithm that detects echo chambers via local seed expansion rather than global enumeration. JECHO first identifies seeds that exceed the JHO threshold and then applies a score-based expansion to optimize structural isolation. 
Extensive experiments on real-world and synthetic networks demonstrate that our theory-guided approach detects more structurally isolated echo chambers than state-of-the-art methods while reducing runtime by orders of magnitude.
\end{abstract}

\begin{CCSXML}
<ccs2012>
   <concept>
       <concept_id>10003752.10003809.10003635</concept_id>
       <concept_desc>Theory of computation~Graph algorithms analysis</concept_desc>
       <concept_significance>500</concept_significance>
       </concept>
   <concept>
       <concept_id>10003120.10003130.10003131.10003292</concept_id>
       <concept_desc>Human-centered computing~Social networks</concept_desc>
       <concept_significance>500</concept_significance>
       </concept>
 </ccs2012>
\end{CCSXML}

\ccsdesc[500]{Theory of computation~Graph algorithms analysis}
\ccsdesc[500]{Human-centered computing~Social networks}

\keywords{Social network analysis, echo chamber detection, Jaccard index, seed selection, seed expansion}


\maketitle \sloppy

\section{Introduction}
Online social networks play a central role in shaping political discourse, information exposure, and collective opinion formation. A key phenomenon on these platforms is the formation of echo chambers, that is, groups of users that are ideologically aligned and structurally isolated from the rest of the network. Echo chambers limit exposure to opposing views, amplify misinformation, and contribute to growing social and political polarization~\cite{del2016echo,choi2020rumor,jiang2021mechanisms}.

Empirical studies have demonstrated the persistence and impact of echo chambers across topics and times. For example, Nasim et al.~\cite{nasim2022we} showed that users involved in polarized debates in Australia repeatedly reappear in later discussions while maintaining similar ideological positions. 
Jiang et al.~\cite{jiang2021mechanisms} further identify key attributes of echo chambers, including the diffusion of misinformation, the spread of conspiracy theories, political polarization, and emotional contagion across various social media contexts.
Such findings have motivated a growing body of work on echo chamber detection (ECD), where the goal is to efficiently identify node sets that form echo chambers in a given input network with node-level opinions~\cite{villa2021echo,impicciche2024comparing,perera2024quantifying}.

However, existing approaches suffer from two fundamental limitations. 
Firstly, many existing formulations are incomplete and capture only limited characteristics of echo chambers, such as opinion homogeneity, opinion extremism, or structural isolation~\cite{mahmoudi2024echo,hartmann2025systematic}, limiting their effectiveness and applicability. 
Secondly, many ECD methods rely on global graph partitioning, community detection, or network-wide opinion modeling which limits their scalability to large social networks, as noted in recent surveys (cf.~\cite{mahmoudi2024echo}).

Echo chambers are inherently complex social phenomena that do not have a single, universally accepted mathematical formulation. Prior work has characterized them using different combinations of ideological alignment, selective exposure, and structural segregation~\cite{mahmoudi2024echo,hartmann2025systematic}. Rather than aiming to formulate this phenomenon in its full real-world complexity, which may not be feasible, our goal is to provide a principled and mathematically rigorous formulation grounded in three core characteristics consistently identified in the prior literature: \emph{internal opinion homogeneity}~\cite{del2016echo,cinelli2021echo}, \emph{opinion extremism}~\cite{bright2018explaining,sunstein2002law}, and \emph{structural isolation}~\cite{cinelli2021echo}.

Building on this formulation, we formalize the ECD problem and prove its NP-hardness, establishing a fundamental computational barrier for exact computation. Despite this hardness result, we show that echo chambers exhibit a key structural property: 
Under practically motivated conditions specified in Section~\ref{ssec:seed-theory}, we show that an echo chamber contains at least one node whose \emph{Jaccard-based homophily (JHO)} score exceeds a theoretically derived threshold.
This result provides a principled way to identify echo chambers through local searches from seed nodes rather than global enumeration. Based on this insight, we propose JECHO, a two-stage ECD framework.
JECHO first uses the JHO score to identify a small set of candidate seed nodes that are likely to belong to echo chambers. A score-based seed expansion (SSE) procedure is then applied to efficiently recover full echo chambers while preserving opinion homogeneity, extremism, and structural isolation. This allows us to detect the desired echo chambers efficiently and effectively, without scanning the entire graph. Extensive experiments demonstrate that JECHO achieves higher detection quality while reducing runtime by orders of magnitude compared to state-of-the-art baselines, such as SEDA~\cite{perera2024quantifying} and EchoGAE~\cite{alatawi2023quantifying}.

One advantage of our framework is its flexibility with respect to the specific metrics used to instantiate these characteristics. We empirically demonstrate that JECHO is largely insensitive to the choice of metric formulations and threshold parameters, with broadly stable overall trends observed across configurations.


Our main contributions are summarized as follows:
\begin{squishlist}

\item \textbf{Unified Definition:}
We introduce a unified definition of echo chambers that integrates opinion homogeneity, opinion extremism, and structural isolation, based on which we formulate a novel ECD problem (Section~\ref{s:prel}).

\item \textbf{Computational Complexity Analysis:}
We prove that the ECD problem is NP-hard via a reduction from the conductance minimization problem~\cite{raghavendra2012reductions} (Section~\ref{s:prel}).

\item \textbf{JECHO Algorithm:}
We prove that, under specified conditions, an echo chamber contains at least one node whose JHO value exceeds a theoretically derived threshold, providing a foundation for local seed-based detection.
We then design JECHO, a fast two-stage seed-based framework that combines JHO with score-based seed expansion for scalable ECD (Section~\ref{s:alg}).

\item \textbf{Empirical Validation:} Extensive experiments on both real-world and synthetic network data demonstrate that JECHO outperforms state-of-the-art methods in terms of both detection quality and runtime (Section~\ref{s:exp}).
\end{squishlist}


\section{Related Work}
\label{s:relatedwork}

\spara{Empirical Prevalence and Societal Impact.} Echo chambers have been empirically observed on platforms such as Twitter ($\mathbb{X}$)~\cite{nasim2022we,villa2021echo}, Reddit~\cite{impicciche2024comparing}, and YouTube~\cite{diaz2023disinformation}.
More broadly, a large body of work has examined polarization, opinion dynamics, selective exposure, and misinformation in online social networks, highlighting the systemic challenges that motivate the study of echo chambers, cf.~\cite{shirzadi2025stubborn,shirzadi2025opinion,del2016spreading,bakshy2015exposure}.
Echo chambers play an important role in polarized political discourse, the dissemination of misinformation, and ideological segregation during elections and crises~\cite{jiang2021mechanisms,amendola2024towards,minici2022cascade}. Extensive computational research indicates that algorithmic filtering and social influence exacerbate these phenomena, resulting in heightened opinion fragmentation and dismissal of evidence~\cite{sasahara2021social,chitra2020analyzing}.

\spara{Conceptual Characteristics of Echo Chambers.}
Echo chambers are commonly characterized by three properties:
\emph{internal opinion homogeneity}~\cite{del2016echo},
\emph{opinion extremism}~\cite{bright2018explaining,sunstein2002law},
and \emph{structural isolation}~\cite{cinelli2021echo,mahmoudi2024echo}.
The two \emph{opinion-related} characteristics capture how repeated exposure to aligned viewpoints reinforces shared beliefs and drives opinions toward more extreme positions within a group~\cite{calderon2019content,cota2019quantifying}.
The \emph{structural} characteristic refers to limited connectivity between the group and the rest of the network, resulting in topologically isolated communities that restrict exposure to opposing views~\cite{zhu2021influence}.
A recent survey provides a unified computational view of ideological isolation, covering echo chambers, filter bubbles, polarization, and selective exposure, and highlights the lack of consistent definitions and metrics across studies~\cite{wang2026ideological}.

\spara{Community-Detection-Based Methods.} Predominantly, computational approaches to ECD integrate community detection with opinion and content analysis. Standard methodologies initially segment the network using community detection methods, such as Louvain~\cite{blondel2008fast}, and subsequently measure intra-group consensus and inter-group divergence through opinions or embeddings~\cite{alatawi2023quantifying,ghafouri2024transformer}. 
Random-walk-based models exploit localized diffusion patterns within subgraphs to capture echo chamber effects~\cite{villa2021echo}. Hybrid approaches further incorporate signed networks or stance similarity, exemplified by SE-Echo in SEDA~\cite{perera2024quantifying} and embedding-based cohesion in EchoGAE~\cite{alatawi2023quantifying}. Although effective for medium-scale networks, these methods generally require global graph processing or repeated community re-optimization, thereby limiting the scalability on graphs with millions of nodes, which is common in practice. Recent work also proposed an unsupervised transformer-based metric that quantifies echo chamber effects using user embedding diversity and inter-echo chamber separability~\cite{ghafouri2024transformer}.

\spara{Single (and Local) Community Detection.}
Local graph partitioning and community detection aim to identify dense subgraphs using local procedures such as personalized PageRank~\cite{andersen2006local}, flow-based cuts~\cite{bae2017scalable}, heat-kernel methods~\cite{kloster2014heat}, and peeling~\cite{miyauchi2018finding}. Although these approaches scale well, they do not model opinion homogeneity or extremism, which are essential for ECD.
Recently, Chen et al.~\cite{chen2025qdisco} proposed the Q-DISCO framework to identify a cohesive subgraph with a relatively large (or small) average opinion.
Although Q-DISCO incorporates opinion polarity through a query opinion parameter, it only constrains the average opinion of the returned set and only optimizes the internal density rather than the structural isolation.
Consequently, the dense subgraphs identified often include moderately or weakly aligned nodes and fail to satisfy the required opinion extremism and structural isolation.

\spara{Dynamics and Diffusion Models.} Another research line studies the co-evolution of opinions and network structure. Minici et al.~\cite{minici2022cascade} infer latent communities and polarity from diffusion cascades, while Zhu et al.~\cite{zhu2021influence} incorporate echo chamber effects into influence maximization. Adaptive voter and temporal models further explain how rewiring and reinforcement drive polarization~\cite{timpanaro2024emergence,cau2024trends}. While these approaches elucidate echo chamber formation, they are not designed for scalable detection in large static networks. Existing methods either scale by relying on relaxed definitions that capture only partial properties or adopt richer definitions that do not scale.

\section{Problem Formulation and Characterization}
\label{s:prel}
We introduce a formal, unified definition of echo chambers that integrates the three core properties consistently identified in the literature:
\emph{internal opinion homogeneity}, \emph{opinion extremism}, and \emph{structural isolation}. Based on this definition, we formulate our ECD problem and prove the NP-hardness, establishing a fundamental barrier for exact computation.

\subsection{Notation and Definitions}
\label{ssec:notation}

A social network is represented as a directed graph $G = (V, E)$, where each node $v\in V$ corresponds to a user, and each edge $(u,v)\in E$ represents that $u$ affects $v$. 
Let $n = |V|$ and $m = |E|$ denote the number of nodes and edges, respectively. 
Undirected graphs can be represented as directed graphs by replacing each edge $\{u,v\}$ with two directed edges, $(u,v)$ and $(v,u)$. All undirected datasets in our experiments were processed accordingly.

For a node $v \in V$, the sets of outgoing and incoming neighbors are defined as
$\Gamma^+(v) = \{ u \in V \mid (v,u) \in E \}$ and
$\Gamma^-(v) = \{ u \in V \mid (u,v) \in E \}$, respectively.
Accordingly, the \emph{out-degree} and \emph{in-degree} of $v$ are
$\deg^+(v) = |\Gamma^+(v)|$ and $\deg^-(v) = |\Gamma^-(v)|$.
For any two subsets $V_1, V_2 \subseteq V$, we define the set of crossing edges as
$\partial(V_1, V_2) = \{ (u,v) \in E \mid u \in V_1,\ v \in V_2 \}$. 
When we need to make the underlying graph explicit, we write $\partial_G(V_1, V_2)$.

The structural isolation is defined as
$\mathcal{I}(S,V) = \frac{|\partial(S, V \setminus S)|}{|\partial(S,S)|}$ for $S\subset V$,
which measures the ratio of edges leaving $S$ to the number of edges that are internal to $S$. Also, we assume $|\partial(S,S)|>0$; otherwise $\mathcal{I}(S,V)$ is defined as $+\infty$.
Note that edges are counted only in the direction from $S$ to $V\setminus S$, which is consistent with the directionality of influence in social networks.
Furthermore, $\mathcal{I}(S, V)$ can be viewed as a conductance-style measure $\frac{|\partial(S, V\setminus S)|}{\sum_{v \in S}\deg^+(v)}$~\cite{chung1997spectral} that replaces the sum of degrees in standard conductance with the number of internal edges.
We intentionally normalize by the number of internal edges rather than the volume $\sum_{v\in S}\deg^+(v)$ because internal edges represent opinion reinforcement within the group, whereas outgoing edges represent information leakage; this ratio directly captures the balance between reinforcement and exposure that characterizes echo chambers. 
We assume that each user $v \in V$ is associated with an opinion value $o_v \in [-1,1]$ toward a topic: the larger $o_v$, the more favorable $v$ is toward the topic. Let $\mathbf{o} = (o_1,o_2,\ldots,o_n)$ be the opinion vector of users.
For a subset of nodes $S \subset V$, we define the mean and variance of the opinions as $\mathbb{E}(\mathbf{o}, S) = \frac{1}{|S|} \sum_{v \in S} o_v$ and $\sigma^2(\mathbf{o}, S) = \frac{1}{|S|} \sum_{v \in S} \left( o_v - \mathbb{E}(\mathbf{o}, S) \right)^2.$

\subsection{Definition of Echo Chambers}
\label{ssec:definition}
Echo chambers in social networks are commonly characterized by internal opinion homogeneity, limited exposure to opposing viewpoints, and opinions that are more extreme, i.e., further from the overall population mean, as reported in prior empirical and computational studies \cite{cinelli2021echo,bright2018explaining,mahmoudi2024echo}. Since no universally accepted mathematical definition exists, we adopt a principled formulation based on three complementary conditions: internal opinion homogeneity, opinion extremism, and structural isolation.

\begin{center}
\setlength{\fboxsep}{2pt}
\setlength{\fboxrule}{0.5pt}
\fcolorbox{blue!40!black}{blue!5!white}{%
  \parbox{0.92\columnwidth}{%
For a directed graph $G=(V,E)$ with an opinion vector $\mathbf{o}=(o_1,o_2,\dots, o_n)$ and non-negative real-valued thresholds $\theta_{\sigma^2},\theta_E,\theta_{\mathcal{I}}\ge 0$, a node set
$S\subset V$ is called a $(\theta_{\sigma^2},\theta_E,\theta_{\mathcal{I}})$-echo chamber, if the following three conditions are satisfied:
\begin{enumerate}
\item \textbf{Internal Opinion Homogeneity:}
$\sigma^2(\mathbf{o},S)\le \theta_{\sigma^2}$.
\item \textbf{Opinion Extremism:}
$\big|\mathbb{E}(\mathbf{o},V)-\mathbb{E}(\mathbf{o},S)\big|\ge \theta_E$.
\item \textbf{Structural Isolation:}
$\mathcal{I}(S,V)
\le \theta_{\mathcal{I}}.$
\end{enumerate}
  }%
}
\end{center}

The variance measure $\sigma^2(\mathbf{o}, S)$ captures \emph{internal opinion homogeneity} by quantifying how tightly concentrated the opinions within $S$ are around their mean, with lower values indicating stronger internal agreement.
The extremism measure $|\mathbb{E}(\mathbf{o}, V)-\mathbb{E}(\mathbf{o}, S)|$ captures \emph{opinion extremism} by quantifying how far the average opinion of $S$ deviates from the global mean opinion, where larger values indicate more extreme ideological positioning.
The measure $\mathcal{I}(S, V)$ represents \emph{structural isolation} and quantifies the ratio of edges leaving the set to edges remaining within it. In our formulation for directed graphs, this captures how opinions flow outwards versus being reinforced internally, with lower values indicating stronger echo chamber isolation.
Together, these three conditions capture complementary aspects of echo chambers, ensuring that the identified groups are internally aligned, structurally isolated, and ideologically distinct from the broader population. Our formulation is guided by interpretability and  consistency with prior studies~\cite{mahmoudi2024echo,cinelli2021echo,bright2018explaining}.

\subsection{Echo Chamber Detection (ECD) Problem}
\label{ssec:problem-statement}

Although echo chambers may resemble dense or well-separated clusters,
ECD fundamentally differs from graph clustering because it simultaneously enforces both opinion homogeneity and extremism.
We formulate the ECD problem as follows:

\begin{center}
\setlength{\fboxsep}{2pt}
\setlength{\fboxrule}{0.5pt}
\fcolorbox{blue!40!black}{blue!5!white}{%
  \parbox{0.92\columnwidth}{%
\textbf{Input:}
A directed graph $G=(V,E)$, an opinion vector 

\noindent
$\mathbf{o}=(o_1,\dots,o_n)$, and two thresholds $\theta_{\sigma^2}$ and $\theta_E$.

\textbf{Output:}
A set $S\subset V$ that minimizes $\mathcal{I}(S,V)$ subject to
\[
\sigma^2(\mathbf{o}, S)\le \theta_{\sigma^2}
\quad\text{and}\quad
|\mathbb{E}(\mathbf{o},V)-\mathbb{E}(\mathbf{o},S)| \ge \theta_E .
\]
  }%
}
\end{center}

This formulation naturally captures our goal of finding the most structurally isolated group (i.e., the lowest $\mathcal{I}(S, V)$) that simultaneously maintains internal opinion homogeneity and opinion extremism.
Any solution $S$ returned by our ECD problem satisfies the echo chamber definition in Section~\ref{ssec:definition}, with $\theta_{\mathcal{I}}$ instantiated as the achieved structural isolation value $\mathcal{I}(S,V)$.

\subsection{Computational Complexity Analysis}
\label{ssec:hardness}

We prove that the ECD problem is NP-hard by constructing a polynomial-time reduction from the conductance minimization problem, which is known to be NP-hard~\cite{raghavendra2012reductions}. 
In this problem, given an undirected graph $G=(V,E)$, we are asked to find $S \subseteq V$ with $|S| \le |V|/2$ that minimizes the conductance $\frac{|\partial(S,V \setminus S)|}{\sum_{v\in S}\text{deg}(v)}$, where $\text{deg}(v)$ is the degree of $v\in V$. 

\begin{theorem}
\label{thm:hardness}
The ECD problem is NP-hard. 
\end{theorem}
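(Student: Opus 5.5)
The plan is a direct polynomial-time reduction. From an undirected conductance instance $G=(V,E)$ we build a directed graph $G'$ with an opinion vector $\mathbf{o}$ and thresholds $\theta_{\sigma^2},\theta_E$. The construction should make the feasible sets of the ECD instance exactly the admissible sets of the conductance instance, and make the objective a strictly increasing function of conductance.

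\textbf{Key step 1 (objective correspondence).} Replace every undirected edge by two opposite arcs, as in Section~\ref{ssec:notation}. For $S\subseteq V$, write $e(S)$ for the number of undirected edges inside $S$ and $c(S)$ for the number of undirected cut edges. Then $|\partial(S,S)|=2e(S)$ and $|\partial(S,V\setminus S)|=c(S)$, while $\sum_{v\in S}\deg(v)=2e(S)+c(S)$. Hence
\[
\phi(S)=\frac{c(S)}{2e(S)+c(S)}=\frac{\mathcal{I}(S,V)}{1+\mathcal{I}(S,V)} .
\]
This map is strictly increasing in $\mathcal{I}$, with $\mathcal{I}=+\infty$ corresponding to $\phi=1$. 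So over any common feasible family, a minimizer of $\mathcal{I}$ is a minimizer of conductance. This step is routine.

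\textbf{Key step 2 (neutralising the opinion constraints).} Give every original node opinion $1$. Add a set $D$ of isolated dummy nodes with opinion $-1$ (e.g.\ $|D|=|V|$), so the global mean is $\mu<1$. Set $\theta_{\sigma^2}=0$ and $\theta_E=1-\mu>0$.
\begin{itemize}
\item Because $\theta_{\sigma^2}=0$, any feasible $S$ is opinion-uniform, so it lies inside $V$ or inside $D$.
\item A set inside $D$ has no internal arcs, so its isolation is $+\infty$; this can never beat a set inside $V$ containing an edge.
\item Every nonempty $S\subseteq V$ meets the extremism constraint with equality.
\end{itemize}
Also, $S=V\cup D$ is excluded both because it has mixed opinions and because $S$ must be a proper subset.

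\textbf{Main obstacle: the size constraint.} The conductance problem restricts to $|S|\le|V|/2$. Without it, $S=V$ would have $\mathcal{I}=0$, and the construction above would allow exactly that choice. The constraint must therefore be encoded through the opinion conditions.

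What I would try first is to break the uniformity of opinions inside $V$ so that the extremism condition fails for sets that are too large. For example, give the original nodes opinion $1$ and attach a counterweight gadget whose opinions are tuned so that the global mean shifts appropriately. I would then pick $\theta_{\sigma^2}$ slightly positive, so that the homogeneity condition is a counting condition. This could translate into a bound of the form ``$S$ contains at most $|V|/2$ original nodes and no gadget nodes.'' The gadget would have to be attached without changing $c(S)$ or $e(S)$ for $S\subseteq V$; isolated gadget nodes achieve this.

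If a cardinality encoding proves awkward, the fallback is to reduce from an equivalent NP-hard variant. One candidate is minimizing $c(S)/\min\{\mathrm{vol}(S),\mathrm{vol}(V\setminus S)\}$ over all nonempty proper $S$. Another is the volume-constrained version $\mathrm{vol}(S)\le\mathrm{vol}(V)/2$. Either is hard by the same cited results~\cite{raghavendra2012reductions}. For these I would use the symmetry $c(S)=c(V\setminus S)$ to argue that restricting to the smaller side loses nothing. It would then remain to enforce ``smaller side'' via the extremism threshold, by weighting dummy opinions proportionally to degrees.

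\textbf{Final check.} Once the feasible families coincide, Key step 1 shows that optimal solutions correspond, so the ECD problem is NP-hard. The construction has polynomial size.
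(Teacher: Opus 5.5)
Your Key step 1 is exactly the paper's objective correspondence $\phi=\mathcal{I}/(1+\mathcal{I})$, and you correctly identify the crux as the constraint $|S|\le|V|/2$. However, you never encode that constraint, so there is no reduction yet, and the direction you sketch cannot work. The quantities $\sigma^2(\mathbf{o},S)$ and $\mathbb{E}(\mathbf{o},S)$ depend only on the proportions in which opinion values occur in $S$, not on $|S|$. Suppose all original nodes share one opinion and feasible sets contain no gadget nodes. Then every nonempty $S\subseteq V$ has the same mean and variance. So no choice of gadget, global mean, $\theta_E$ or $\theta_{\sigma^2}$ can admit small subsets of $V$ while rejecting $S=V$, which has $\mathcal{I}=0$. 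Breaking the uniformity of opinions inside $V$ instead makes feasibility depend on \emph{which} original nodes are chosen, which no longer matches a pure cardinality constraint. The volume-based fallback faces the same obstacle, and an unweighted mean cannot see degrees anyway; ``weighting dummy opinions proportionally to degrees'' is not yet a construction. Consequently your final check, which assumes the feasible families coincide, has nothing to rest on.

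The missing idea is the reverse of your Key step 2. Feasible sets should be forced to \emph{contain} padding drawn from a pool of bounded size, while the original nodes dilute the extremism. The paper gives every original node the neutral opinion $0$. It adds two pools $V^+,V^-$ of $\lfloor|V|/2\rfloor$ isolated nodes with opinions $+1$ and $-1$, so $\mathbb{E}(\mathbf{o},V')=0$. It then sets $\theta_E=1/2$ and makes homogeneity vacuous ($\theta_{\sigma^2}=+\infty$; $\theta_{\sigma^2}=1$ also suffices). Suppose $S$ has $a$, $p$, $q$ nodes in $V_{\mathrm{copy}}$, $V^+$, $V^-$ respectively. Then $\mathbb{E}(\mathbf{o},S)\ge 1/2$ forces $p\ge a+3q$, hence $a\le p\le\lfloor|V|/2\rfloor$, and symmetrically on the negative side. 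Conversely, any $S\subseteq V$ with $|S|\le|V|/2$ becomes feasible after adding all of $V^+$. Because isolated nodes carry no arcs, $\mathcal{I}(S,V')=\mathcal{I}(S\cap V_{\mathrm{copy}},V')$. The correspondence therefore goes through the projection $S\mapsto S\cap V_{\mathrm{copy}}$, not through equality of feasible families, and your Key step 1 then completes the argument.
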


\begin{proof}
Let $G=(V, E)$ be an instance of the conductance minimization problem. 
We make an instance $(G'=(V',E'),\mathbf{o},\theta_{\sigma^2},\theta_E)$ of the ECD problem as follows: 
The vertex set $V'$ is defined as $V'=V_\mathrm{copy}\cup V^+ \cup V^-$, where $V_\mathrm{copy}$ is the copy of the original vertex set $V$ of $G$, and $V^+$ and $V^-$ are disjoint sets, each of which is of size $\lfloor |V|/2 \rfloor$. 
The edge set $E'$ is a directed counterpart of the original edge set $E$ of $G$, i.e., $E'=\{(u,v),(v,u)\mid \{u,v\}\in E\}$. 

Note therefore that the nodes in $V^+$ and $V^-$ are all isolated. 
The opinion vector $\mathbf{o}$ is defined as $o_v=0$ for all $v\in V_\mathrm{copy}$, $o_v=1$ for all $v\in V^+$, and $o_v=-1$ for all $v\in V^-$, resulting in $\mathbb{E}(\mathbf{o},V')=0$. 
The thresholds $\theta_{\sigma^2}$ and $\theta_E$ are set as $\theta_E=1/2$ and $\theta_{\sigma^2}=+\infty$. 

Then, we see that for any $\texttt{val}>0$, the optimal value to the instance $G$ of the conductance minimization problem is less than or equal to $\texttt{val}$ if and only if the optimal value to the instance $(G'=(V', E'),\mathbf{o},\theta_{\sigma^2},\theta_E)$ to the ECD problem is less than or equal to $\frac{\texttt{val}}{1-\texttt{val}}$. 
We first prove ``$\Rightarrow$.'' 
By the assumption, there exists $S\subset V$ with $|S|\leq |V|/2$ that has a conductance less than or equal to $\texttt{val}$. 
Let $S_\mathrm{copy}$ be the corresponding vertex subset in $V'$. 
By simple calculation, we have 
\begin{align*}
\frac{|\partial_G(S,V\setminus S)|}{\sum_{v\in S}\mathrm{deg}(v)}\leq \texttt{val} \Leftrightarrow \frac{|\partial_{G'}(S_\mathrm{copy},V\setminus S_\mathrm{copy})|}{|\partial_{G'}(S_\mathrm{copy},S_\mathrm{copy})|}\leq \frac{\texttt{val}}{1-\texttt{val}}.
\end{align*}
As adding any isolated node to a subset does not affect the structural isolation value of the subset, $S_\mathrm{copy}\cup V^+$ and $S_\mathrm{copy}\cup V^-$ have the same structural isolation value as $S_\mathrm{copy}$. Recalling that $|S|\leq |V|/2$ (and thus $|S|\leq \lfloor |V|/2\rfloor$) and $|V^+|=|V^-|=\lfloor |V|/2\rfloor$, we have $\mathbb{E}(\mathbf{o},S_\mathrm{copy}\cup V^+)\geq 1/2$ and $\mathbb{E}(\mathbf{o},S_\mathrm{copy}\cup V^-)\leq -1/2$, meaning that 
$S_\mathrm{copy}\cup V^+$ and $S_\mathrm{copy}\cup V^-$ are feasible for the ECD problem. 

Next, we prove ``$\Leftarrow$.'' 
By the assumption, there exists $S\subset V'$ with $|\mathbb{E}(\mathbf{o}, V')-\mathbb{E}(\mathbf{o}, S)|\geq \theta_E=1/2$ that has the structural isolation value less than or equal to $\frac{\texttt{val}}{1-\texttt{val}}$. 
By the opinion extremism constraint, it is easy to see that $|S\cap V_\mathrm{copy}|\leq \lfloor |V|/2 \rfloor \leq |V|/2$. From the above equivalence, we see that the vertex subset corresponding to $S\cap V_\mathrm{copy}$ in $V$ has the conductance less than or equal to $\texttt{val}$. 
\end{proof}

This NP-hardness result rules out the possibility of an efficient exact algorithm, unless $\text{P} = \text{NP}$, motivating the design of efficient practical methods for large-scale instances of the ECD problem.

\section{The JECHO Algorithm}
\label{s:alg}

We propose a theory-guided \emph{seed selection and expansion algorithm} that identifies echo chambers through local exploration around seed nodes rather than a global search. Our algorithm, \textbf{JECHO}, involves two stages:

\begin{enumerate}[leftmargin=*, itemsep=0pt, parsep=3pt, topsep=3pt]
 \item \textbf{Seed selection}, which identifies a small set of nodes likely to belong to echo chambers.
 \item \textbf{Seed expansion}, which grows each seed node into a full echo chamber by optimizing structural isolation while preserving internal opinion homogeneity and opinion extremism. 
\end{enumerate}
This design is motivated by Theorem~\ref{thm:jho-existence} (described in Section~\ref{ssec:seed-theory}), which guarantees that every echo chamber contains at least one node with high local homophily and strong structural cohesion.

\subsection{Jaccard-Based Homophily (JHO)}
\label{ssec:jho}

We first define a primary concept to be used in our algorithm design, which we refer to as Jaccard-based homophily (JHO). For any two finite sets $A$ and $B$, the Jaccard index is defined as
$\mathrm{JI}(A,B)=\frac{|A\cap B|}{|A\cup B|}.$
In our setting, we quantify the \emph{structural similarity} between nodes by applying the Jaccard index to their outgoing neighborhoods. For nodes $w,v\in V$, we define
\[
\mathrm{JI}(w,v):=\mathrm{JI}(\Gamma^+(w),\Gamma^+(v))
= \frac{|\Gamma^+(w)\cap\Gamma^+(v)|}{|\Gamma^+(w)\cup\Gamma^+(v)|}.
\]
To jointly capture opinion agreement and structural cohesion, for any node $w\in V$, we define the JHO score as follows:
\begin{equation}
\mathrm{JHO}(w):=
\frac{\sum_{v\in\Gamma^+(w)} \mathrm{JI}(w,v)
\left(1-\frac{(o_w-o_v)^2}{4}\right)}
{\sum_{v\in\Gamma^+(w)} \mathrm{JI}(w,v)}.
\label{eq:jho}
\end{equation}
Here, the factor $4$ normalizes the squared opinion difference because $o_v\in[-1,1]$, ensuring $\frac{(o_w-o_v)^2}{4}\in[0,1]$ and hence $\mathrm{JHO}(w)\in[0,1]$ whenever the denominator is nonzero. 
We define $\mathrm{JHO}(w)=0$ if the denominator in Eq.~\eqref{eq:jho} is zero, as this indicates the absence of local structural overlap.
Intuitively, $\mathrm{JHO}(w)$ is large when $w$ is embedded in a locally cohesive neighborhood (high Jaccard similarity) whose members are opinion-aligned with $w$.
For convenience, we also define the complementary heterogeneity score by 
\[
\mathrm{JHet}(w):= 1-\mathrm{JHO}(w)=
\frac{\sum_{v\in\Gamma^+(w)} \mathrm{JI}(w,v)\frac{(o_w-o_v)^2}{4}}
{\sum_{v\in\Gamma^+(w)} \mathrm{JI}(w,v)}.
\]

\subsection{Theoretical Basis for Seed Selection}
\label{ssec:seed-theory}

Here, we provide a theoretical basis for seed selection using $\mathrm{JHO}$ scores. Specifically, we show that any vertex subset $S$ that satisfies some conditions (which echo chambers are likely to satisfy in practice) contains at least one node with a large $\mathrm{JHO}$ score.

\begin{theorem}
\label{thm:jho-existence}
Let $S \subset V$ be a vertex subset that satisfies the following conditions: 
\begin{enumerate}[leftmargin=*, itemsep=3pt, parsep=3pt, topsep=3pt]
\item Opinion alignment: all nodes $v\in S$ with $o_v\neq 0$ have opinions with the same sign (i.e., either positive or negative). 
\item Internal cohesion:
let $\alpha = \frac{\sum_{(u,v)\in\partial(S,S)} \mathrm{JI}(u,v)}{|\partial(S,S)|}$ and $\alpha' = \frac{\sum_{(u,v)\in\partial(S,V\setminus S)} \mathrm{JI}(u,v)}
{|\partial(S,V\setminus S)|}$ 
denote the average structural similarities over internal and outgoing edges of $S$,
respectively. Then, it holds that $\alpha > \alpha'$.
\end{enumerate}
Then there exists a node $w \in S$ such that $\mathrm{JHO}(w) > \frac{3}{4(1+\mathcal{I}(S,V))}.$
\end{theorem}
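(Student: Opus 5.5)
The plan is a double-counting (averaging) argument: sum the numerators and denominators of $\mathrm{JHO}(w)$ over all $w\in S$, bound the aggregate ratio from below by $\frac{3}{4(1+\mathcal{I}(S,V))}$, and then use the mediant inequality to extract a single node attaining at least the aggregate ratio. For $w\in S$, write
\[
N(w)=\sum_{v\in\Gamma^+(w)}\mathrm{JI}(w,v)\Bigl(1-\tfrac{(o_w-o_v)^2}{4}\Bigr),\qquad D(w)=\sum_{v\in\Gamma^+(w)}\mathrm{JI}(w,v),
\]
so that $\mathrm{JHO}(w)=N(w)/D(w)$ whenever $D(w)>0$. Every out-edge of a node in $S$ is either in $\partial(S,S)$ or in $\partial(S,V\setminus S)$. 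Hence $\sum_{w\in S}D(w)$ splits into an internal part and an outgoing part, and likewise for $\sum_{w\in S}N(w)$.

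\emph{Step 1 (bound the numerators).} For an internal edge $(u,v)\in\partial(S,S)$, opinion alignment gives that $o_u$ and $o_v$ both lie in $[0,1]$ or both lie in $[-1,0]$; zeros are compatible with either sign. Hence $(o_u-o_v)^2\le 1$, and the weight $1-\frac{(o_u-o_v)^2}{4}$ is at least $3/4$. For an outgoing edge the weight is at least $0$, since opinions lie in $[-1,1]$. Therefore
\[
\sum_{w\in S}N(w)\ \ge\ \tfrac34\sum_{(u,v)\in\partial(S,S)}\mathrm{JI}(u,v)=\tfrac34\,\alpha\,|\partial(S,S)|.
\]

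\emph{Step 2 (bound the denominators).} We have
\[
\sum_{w\in S}D(w)=\alpha|\partial(S,S)|+\alpha'|\partial(S,V\setminus S)|.
\]
Using $\alpha'<\alpha$, this is strictly less than $\alpha\bigl(|\partial(S,S)|+|\partial(S,V\setminus S)|\bigr)=\alpha|\partial(S,S)|\,(1+\mathcal{I}(S,V))$. The strictness requires $|\partial(S,V\setminus S)|>0$. This holds because the hypothesis $\alpha>\alpha'$ presupposes that $\alpha'$ is defined; likewise $|\partial(S,S)|>0$ is needed for $\alpha$ and $\mathcal{I}$ to be defined. Also $\alpha>\alpha'\ge 0$, so $\alpha>0$, which makes both aggregate sums positive. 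Combining the two steps yields
\[
\frac{\sum_{w\in S}N(w)}{\sum_{w\in S}D(w)}>\frac{3}{4(1+\mathcal{I}(S,V))}.
\]

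\emph{Step 3 (extract a node).} Let $S_+=\{w\in S: D(w)>0\}$. Nodes with $D(w)=0$ also have $N(w)=0$, because every term of $N(w)$ is bounded by the corresponding term of $D(w)$. So the aggregate ratio equals $\sum_{S_+}N/\sum_{S_+}D$. A ratio of sums with positive denominators is a weighted average of the individual ratios $N(w)/D(w)$, with weights $D(w)/\sum_{S_+}D$. Hence some $w\in S_+$ has $\mathrm{JHO}(w)=N(w)/D(w)$ at least the aggregate ratio, and thus strictly greater than $\frac{3}{4(1+\mathcal{I}(S,V))}$.

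The main obstacle is the degenerate case in Step 2, since strictness hinges entirely on $\alpha'<\alpha$ together with a nonempty outgoing edge set. I would make the convention explicit that condition~2 implicitly requires $\partial(S,V\setminus S)\neq\emptyset$. Otherwise, with no outgoing edges, the argument only gives the non-strict bound $\mathrm{JHO}(w)\ge 3/4$. A minor related point is that $\alpha>0$ must hold so that division in Step 3 is legitimate; this follows from $\alpha>\alpha'\ge 0$.
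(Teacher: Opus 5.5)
Your proof is correct and is essentially the paper's double-counting argument. The paper applies the averaging step to the structural ratio: from $\sum_{v\in S}\mathrm{out}(v)/\sum_{v\in S}\mathrm{in}(v)<\mathcal{I}(S,V)$ it picks $w\in S$ with $\mathrm{out}(w)<\mathcal{I}(S,V)\,\mathrm{in}(w)$ and then bounds $\mathrm{JHO}(w)\ge \frac{3}{4}\mathrm{in}(w)/(\mathrm{in}(w)+\mathrm{out}(w))$, whereas you bound the aggregate numerator and denominator first and extract a node via the mediant inequality, which only reorders the same ingredients. Your remark that strictness needs $\partial(S,V\setminus S)\neq\emptyset$ is right (the paper uses this implicitly when it cross-multiplies), and your treatment of nodes with zero denominator is more careful than the paper's.
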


\begin{proof}
For each $v\in S$, we define the sum of structural similarities between $v$ and nodes in $S$, and between $v$ and nodes outside $S$ as follows: 
$
\mathrm{in}(v)=\sum_{u\in \Gamma^+(v)\cap S}\mathrm{JI}(u,v)\quad 
\text{and}\quad 
\mathrm{out}(v)=\sum_{u\in \Gamma^+(v)\setminus S}\mathrm{JI}(u,v). 
$
By the internal-cohesion assumption (2) in the statement, we have 
\begin{align*}
\frac{\sum_{v\in S}\mathrm{in}(v)}{|\partial(S,S)|} 
> \frac{\sum_{v\in S}\mathrm{out}(v)}{|\partial(S,V\setminus S)|},\quad 
\text{i.e.,}\quad 
\frac{|\partial(S,V\setminus S)|}{|\partial(S,S)|}
> \frac{\sum_{v\in S}\mathrm{out}(v)}{\sum_{v\in S}\mathrm{in}(v)}, 
\end{align*}
meaning that there exists $w\in S$ that satisfies $\mathcal{I}(S,V)>\frac{\mathrm{out}(w)}{\mathrm{in}(w)}$. 
Then, using the opinion-alignment assumption (1), we lower bound $\mathrm{JHO}(w)$ by defining $\psi(w,u):=\mathrm{JI}(w,u)\left(1-\frac{(o_w-o_u)^2}{4}\right)$. Since $\mathrm{out}(w) < \mathcal{I}(S,V)\mathrm{in}(w)$, we have

\begin{align*}
\mathrm{JHO}(w)
&= \frac{\sum_{u\in\Gamma^+(w)\cap S}\psi(w,u)
+\sum_{u\in\Gamma^+(w)\setminus S}\psi(w,u)}
{\mathrm{in}(w)+\mathrm{out}(w)} \\
&\ge \frac{\frac{3}{4}\mathrm{in}(w)}{\mathrm{in}(w)+\mathrm{out}(w)} 
> \frac{3}{4(1+\mathcal{I}(S,V))}. 
\end{align*}
This completes the proof. 
\end{proof}

%
%
%
%

This theorem shows that, under the stated opinion-alignment and internal-cohesion assumptions, an echo chamber contains at least one node with a high JHO score. Accordingly, we select nodes with the highest JHO scores as candidate seeds.

\subsection{Algorithm}
\label{ssec:jecho}

We present our algorithm, which we refer to as JECHO, in Algorithm~\ref{alg:seed_selection}.
We first pre-filter nodes to obtain a subset $\mathcal{B}$ containing only the most extreme nodes by $|o_v|$, retaining a fixed proportion of the highest-magnitude opinions to satisfy opinion extremism requirements. For each node in $\mathcal{B}$, the algorithm computes its \textbf{JHO} score.
While $\mathcal{B}\neq \emptyset$ and we do not reach the $k_\mathrm{max}$ echo chamber candidates, the algorithm repeats the following process: select a seed node $v^*\in \mathcal{B}$, expand it into an echo chamber candidate using SSE, introduced below, and remove the candidate's nodes from $\mathcal{B}$. Finally, it returns the best echo chamber found during expansion.

\begin{algorithm}[!htpb]
\caption{JECHO}
\label{alg:seed_selection}
\KwIn{Graph $G=(V,E)$, opinions $\mathbf{o}$, thresholds $\theta_{\sigma^2}$, $\theta_E$}
\KwOut{Echo chamber $\mathcal{C}^*$}
\BlankLine

$n \gets |V|$

$k_{\max} \gets \lfloor \rho_s n \rfloor$ \tcp*[r]{$\rho_s$: seed ratio}

$\mathcal{B} \gets$ top $\lfloor \rho_e n \rfloor$ nodes in $V$ by $|o_v|$ \tcp*[r]{$\rho_e$: extremeness ratio}

\ForEach{$v \in \mathcal{B}$}{
$s_v \gets \texttt{JHO}(v)$
}

$k \gets 0$, $\mathcal{C}^* \gets \emptyset$, $\mathcal{I}^* \gets \infty$

\While{$\mathcal{B} \neq \emptyset \land k < k_{\max}$}{
$v^* \gets$ node in $\mathcal{B}$ with the highest $s_v$

$k \gets k + 1$

$(\mathcal{C},\mathcal{I}) \gets \texttt{SSE}(G,v^*,\mathbf{o},\theta_{\sigma^2},\theta_E)$ 

$\mathcal{B} \gets \mathcal{B} \setminus \mathcal{C}$

\If{$\mathcal{I} < \mathcal{I}^*$}{
$\mathcal{I}^* \gets \mathcal{I}$,
$\mathcal{C}^* \gets \mathcal{C}$
}
}
\Return $\mathcal{C}^*$
\end{algorithm}

Now, we describe how the \textbf{SSE} expands a given seed node $v^*$ into an echo chamber and how nodes are prioritized during the expansion process.
Starting from the seed node $v^*$, the \textbf{SSE} greedily expands $v^*$ into an echo chamber by iteratively adding neighboring nodes while enforcing internal opinion homogeneity, opinion extremism, and minimizing structural isolation. At each iteration, SSE selects the highest-priority candidate node, adds it to the current echo chamber, and updates the set of candidate nodes accordingly. Throughout the expansion, SSE tracks the best echo chamber encountered based on structural isolation and returns the resulting echo chamber together with its isolation value.

SSE maintains a priority ordering over candidate nodes. Each node $v$ is scored based on its structural embedding in the current echo chamber, combining its internal-to-external connectivity ratio with its Jaccard-based homophily score $\mathrm{JHO}(v)$. Priorities are updated incrementally as the chamber grows, only for nodes whose adjacency changes, ensuring efficient expansion. A complete specification is provided in Algorithm~\ref{alg:expansion}.
Consistent with our ECD problem formulation and to enable a uniform comparison across methods, we report the best echo chamber identified in each run. JECHO can also return multiple candidate echo chambers by retaining the results obtained from additional seeds.

\begin{algorithm}[!htpb]
\caption{Score-Based Seed Expansion (SSE)}
\label{alg:expansion}
\KwIn{Graph $G=(V,E)$, seed $v^*$, opinions $\mathbf{o}$, thresholds $\theta_{\sigma^2},\theta_E$}
\KwOut{Best echo chamber $\mathcal{C}^*$ and its $\mathcal{I}^*$}
\BlankLine

\textbf{Init: }
$\mathcal{C}\gets \emptyset$, 
$\mathcal{C}^*\gets \emptyset$, 
$\mathcal{I}^*\gets \infty$,
$\mathcal{H}\gets \{(v^*,0)\}$,
$\mathcal{A}\gets \{v^*\}$
\BlankLine

$\mathrm{stats}\gets \emptyset$ \tcp*[r]{cached connectivity statistics}

\While{$\mathcal{H}\neq\emptyset $}{
$u\gets \mathcal{H}.\text{pop}()$

$\mathcal{C}\gets \mathcal{C}\cup\{u\}$

$(V_S,\Delta_E,\mathcal{I})\gets$ compute opinion homogeneity, extremism, and structural isolation of $\mathcal{C}$

\If{$|\mathcal{C}|\ge 10 \land V_S\le\theta_{\sigma^2} \land \Delta_E\ge\theta_E \land \mathcal{I}\le\mathcal{I}^*$}{
$\mathcal{C}^*\gets \mathcal{C}$,
$\mathcal{I}^*\gets \mathcal{I}$
}

$\mathcal{N}\gets \{v \in \Gamma^+(u) \mid \mathbf{o}[v]o_{v^*}>0,\ v\notin \mathcal{A}\}$

$\mathcal{A}\gets \mathcal{A}\cup \mathcal{N}$

$\mathcal{N}_{\text{affected}} \gets 
\{ w \in \mathcal{H}.\text{keys()} \mid
\Gamma^+(w)\cap \mathcal{N} \neq \emptyset \}$

\ForEach{$v \in \mathcal{N} \cup \mathcal{N}_{\text{affected}}$}{
    \If{$v \notin \mathrm{stats}$}{
        $\mathrm{in} \gets |\Gamma^+(v) \cap \mathcal{C}|$, $\mathrm{out} \gets |\Gamma^+(v)| - \mathrm{in}$
    }
    \Else{
        $(\mathrm{in},\mathrm{out}) \gets \mathrm{stats}[v]$
        
        $\mathrm{in} \gets \mathrm{in} + 1$, $\mathrm{out} \gets \mathrm{out} - 1$
    }

    $\mathrm{stats}[v] \gets (\mathrm{in},\mathrm{out})$

$\text{score} \gets \texttt{JHO}(v)$

$\mathrm{priority} \gets 
\begin{cases}
-\mathrm{in}, & \mathrm{out}=0 \\
-(\mathrm{in}/\mathrm{out}) \cdot \text{score} \cdot |\mathbf{o}[v]|, & \text{otherwise}
\end{cases}$

\eIf{$v \in \mathcal{N}$}{
$\mathcal{H}.\text{push}(v,\mathrm{priority})$
}{
$\mathcal{H}.\text{update}(v,\mathrm{priority})$
}
}
}
\Return $\mathcal{C}^*,\mathcal{I}^*$
\end{algorithm}

\subsection{Runtime Analysis}
\label{ssec:complexity}
Let $n=|V|$, $m=|E|$, and $k_{\max}$ be the maximum number of seeds expanded
(typically $k_{\max}=O(\rho_s n)$).
Assuming Jaccard similarities are precomputed/cached, all JHO scores can be computed in $O(m)$ time.
Each SSE execution updates the required metrics incrementally and maintains a priority queue, yielding
$O((m+n)\log n)$ time per run.
Thus, JECHO runs in $O(k_{\max}(m+n)\log n)$ time (details in Appendix~\ref{apdx:timecomplexity}).

\section{Experimental Evaluation}
\label{s:exp}

We evaluate JECHO on real-world and synthetic network structures under a unified experimental protocol with multiple opinion-generation settings, complemented by two established datasets containing real opinion labels.
Throughout this section, modularity denotes the directed modularity of Leicht and Newman~\cite{leicht2008community}, which measures how strongly edges are concentrated within communities in a directed graph.
All implementations are available at \href{https://github.com/aSafarpoor/JECHO-CIKM}{https://github.com/aSafarpoor/JECHO-CIKM}.

\subsection{Datasets and Opinion Assignment}
\spara{Real-world networks.}
JECHO is evaluated on several real-world networks from the SNAP repository~\cite{snapnets}, including Facebook (4.0K nodes), LastFM (7.6K), Git (37.7K), Twitter (81.3K), and Pokec (1.63M), along with the opinion-based labeled datasets Twitter-RW (549 nodes) and Reddit-RW (557 nodes) datasets~\cite{de2014learning}. 
These networks span scales from hundreds to over one million nodes and exhibit diverse densities and community structures, providing a challenging testbed for the ECD problem.

\spara{Labeled real-world datasets.}
A key limitation in evaluating echo chamber detection methods is the scarcity of real-world networks that provide both graph structure and reliable opinion labels. The Twitter-RW and Reddit-RW datasets introduced earlier partially address this requirement.
However, these datasets are relatively small and correspond to subgraphs of larger platforms; thus, they may not fully capture the structural complexity of large-scale networks. Nevertheless, they provide a practical setting for evaluating the model behavior under real opinion signals.

\spara{Synthetic networks.}
In addition to real-world networks, we generated synthetic graphs using the Stochastic Block Model (SBM)~\cite{clauset2004finding}. Intra- and inter-block edge probabilities were adjusted to achieve target modularity levels, enabling controlled evaluation across different community strengths.
We consider SBM graphs with varying sizes (SBM\mbox{-}5K, SBM\mbox{-}10K, SBM\mbox{-}50K, and SBM\mbox{-}100K), where the suffix denotes the number of nodes. 

\spara{Opinion generation and variants.}
Node opinions were provided for Twitter-RW and Reddit-RW and rescaled to $[-1,1]$. 
For all other datasets, in the absence of ground-truth opinion labels, opinions were generated synthetically using a multistage pipeline designed to reflect realistic opinion formation processes. The nodes were first partitioned using a streaming neighborhood procedure, with each partition assigned a latent mean opinion drawn from a Gaussian distribution. Individual node opinions were then sampled from a Gaussian distribution centered on the mean of their partition and subsequently refined using the Friedkin-Johnsen (FJ) model~\cite{friedkin1990social}.
Starting from the FJ opinions, we constructed two additional variants. The Filtering variant applies a homophily-based filtering step inspired by Springsteen et al.~\cite{springsteen2024algorithmic} to increase local opinion homogeneity, while the Extreme variant fixes the opinions of a small fraction of nodes at $\pm 1$, encouraging stronger polarization.
This yields three opinion variants: \textbf{FJ} (baseline equilibrium opinions), \textbf{Filtering} (enhanced local homogeneity), and \textbf{Extreme} (highly polarized opinions). In all setups, Gaussian noise was added, and opinions were normalized to $[-1,1]$.

\subsection{Compared Methods}
\label{ssec:Compatedmethods}

JECHO is compared with representative baseline methods spanning seed-based, community-based, and query-centric ECD paradigms. For seed selection, we consider degree, node-level homophily, an entropy-based score adapted from BeECD~\cite{wang2023beecd} and AttriRank~\cite{hsu2017unsupervised}. For seed expansion, we use AttriPPR, which is a personalized PageRank variant that combines structural transitions with attribute (opinion) similarity through a weighted mixture.

Given a seed node, AttriPPR produces a ranking over nodes, which are then added sequentially to form candidate echo chambers while enforcing constraints on homogeneity, extremeness, and structural isolation. The combination of JHO seed selection with AttriPPR expansion is denoted as JHO\&APPR. The procedure is summarized in Algorithm~\ref{alg:attrippr_expansion}.
We also evaluated two state-of-the-art community-based ECD methods. SEDA~\cite{perera2024quantifying} was adapted to an opinion-only setting using a signed adjacency matrix, whereas EchoGAE~\cite{alatawi2023quantifying} was adapted by applying Louvain clustering followed by echo chamber selection. 
Baselines require certain adaptations to align with our ECD formulation, though their core objectives and structures are preserved. As these methods target related but distinct objectives, we evaluate them under the proposed formulation, where structural isolation, JECHO's own objective, is the primary metric.
Adaptation details are in Appendix~\ref{APX:baselines}.

\begin{algorithm}[!htpb]
\caption{AttriPPR-Based Expansion}
\label{alg:attrippr_expansion}

\KwIn{Graph $G=(V,E)$, seed $v^*$, opinions $\mathbf{o}$, thresholds $\theta_{\sigma^2},\theta_E$}
\KwOut{Best echo chamber $\mathcal{C}^*$ and its $\mathcal{I}^*$}

$\pi \gets$ AttriPPR scores for $v^*$ based on $G=(V,E)$ and $\mathbf{o}$

$V' \gets$ nodes sorted by $\pi$ (descending)

$\mathcal{C} \gets \{v^*\}$, $\mathcal{C}^* \gets \{v^*\}$, $\mathcal{I}^* \gets \infty$

\For{$u \in V'$}{
    $\mathcal{C} \gets \mathcal{C} \cup \{u\}$

    $(V_S,\Delta_E,\mathcal{I})\gets$ compute opinion homogeneity, extremism, and structural isolation of $\mathcal{C}$

    \If{$|\mathcal{C}| \ge 10 \land V_S \le \theta_{\sigma^2} \land \Delta_E \ge \theta_E \land \mathcal{I} \le \mathcal{I}^*$}{
   $\mathcal{C}^* \gets \mathcal{C}$,
   $\mathcal{I}^* \gets \mathcal{I}$
    }
}

\Return $\mathcal{C}^*$, $\mathcal{I}^*$
\end{algorithm}

Finally, we implemented Q-Peeling \cite{chen2025qdisco}, a query-centric densest subgraph method. In our setting, the query $q \in \{-1,+1\}$ specifies the target opinion polarity, where $q=+1$ (resp. $q=-1$) corresponds to communities aligned with positive (resp. negative) opinions with opinion close to $\pm 1$. For each dataset, Q-Peeling was run with both $q=+1$ and $q=-1$, and the resulting communities were evaluated using the same ECD constraints.
We additionally considered singleton echo chambers as a lower-bound baseline, used only when no non-trivial echo chamber satisfying the ECD constraints was found. A node $v$ is deemed feasible if its opinion deviates from the global mean by at least $\theta_E$. For a singleton ${v}$, structural isolation is defined as $\mathcal{I}({v},V)=\deg^+(v)$, and we report the feasible singleton minimizing $\deg^+(v)$.

\subsection{Experimental Procedure}

For each dataset and opinion variant, all methods were evaluated over $10$ independent runs with different random opinion initializations. For Twitter-RW and Reddit-RW, which contained fixed opinions, experiments were conducted once. In each run, for the detected echo chamber $S$, we measured structural isolation $\mathcal{I}(S, V)$, runtime, and success rate, defined as the fraction of runs producing a valid echo chamber of at least $10$ nodes. Unless otherwise stated, the results are averaged over the runs.

All experiments were conducted on a single workstation with 32\,GB RAM. The SBM parameters, opinion generation settings, and detection thresholds were fixed to appropriate values across the datasets. For real-world graphs, we used $\theta_{\sigma^2}=0.075$ and $\theta_E=0.3$, while for SBMs $\theta_{\sigma^2}=0.05$ and $\theta_E=0.3$. The seed ratio was set to $0.005$ for SBMs and small networks and $0.001$ for real-world networks. A minimum echo chamber size of 10 nodes was enforced. The full hyperparameter settings are provided in Appendix~\ref{APX:setup}.
To ensure scalability and a fair comparison, all methods were implemented with memory-efficient data structures (please see Appendix~\ref{APX:heapsizebound}).

The default threshold values were chosen to avoid overly permissive settings that would admit loosely structured or weakly aligned groups while ensuring meaningful levels of opinion homogeneity and extremism consistent with the notion of echo chambers. These values were selected based on empirical validation of smaller instances to balance feasibility and selectivity.

\subsection{Overall Performance Comparison}
\label{subsec:Overall-Performance-Comparison}
Table~\ref{tab:result_combined} compares JECHO with all baseline methods on both real-world and synthetic networks. The results are reported in terms of structural isolation, with lower values indicating stronger isolation.
For the Pokec dataset, which is the largest used network in this paper, with the FJ-based opinion assignment, JECHO achieved $\mathcal{I}=0.42$ with an average runtime of $6{,}429$ seconds, while JHO\&APPR obtained $\mathcal{I}=1.22$ with an average runtime of $15{,}155$ seconds. Q-Peeling did not identify any echo chambers, and the remaining methods failed to return results within the imposed time limit.

\begin{table*}[!htpb]
\centering
\caption{ECD results on different networks. Lower $\mathcal{I}$ indicates stronger structural isolation. Results are reported for three opinion settings: FJ, Filt (Filtering), and Ext (Extreme). Timeouts or invalid runs use the singleton baseline (Section~\ref{ssec:Compatedmethods}); entries with no valid results are shown in \textcolor{violet}{violet}. Best and second-best results are shown in bold and underlined, respectively (excluding violet). 
Success rates (valid echo chambers with at least 10 nodes found within the time limit) are reported in Section~\ref{subsec:Overall-Performance-Comparison}.}
\label{tab:result_combined}
\small
\setlength{\tabcolsep}{1.1pt}
\begin{tabular}{|cc|
ccc|ccc|ccc|ccc|
ccc|ccc|ccc|ccc|}
\hline

\timerow
 & &
\multicolumn{3}{c}{Facebook} &
\multicolumn{3}{c}{LastFM} &
\multicolumn{3}{c}{Git} &
\multicolumn{3}{c|}{Twitter} &
\multicolumn{3}{c}{SBM\mbox{-}5K} &
\multicolumn{3}{c}{SBM\mbox{-}10K} &
\multicolumn{3}{c}{SBM\mbox{-}50K} &
\multicolumn{3}{c|}{SBM\mbox{-}100K} \\

Method & Metric
& FJ & Filt & Ext
& FJ & Filt & Ext
& FJ & Filt & Ext
& FJ & Filt & Ext
& FJ & Filt & Ext
& FJ & Filt & Ext
& FJ & Filt & Ext
& FJ & Filt & Ext \\
\hline 

JECHO
& $\mathcal{I}$
& \underline{0.51} & \underline{0.09} & \underline{0.06} & \textbf{0.46} & \textbf{0.15} & \textbf{0.20} & \textbf{0.94} & \textbf{1.00} & \textbf{1.00} & \underline{0.15} & \textbf{0.05} & \textbf{0.10}
& \textbf{0.73} & \textbf{0.42} & \textbf{0.54} & \textbf{1.03 }& 0.48 & \textbf{0.55} & \textbf{1.00} & \textbf{0.95 }& \textbf{0.98} & \textbf{1.10} & \textbf{0.96} & \textbf{1.00} \\
\timerow & \scriptsize{Time}
& \scriptsize{0.38} & \scriptsize{0.38} & \scriptsize{0.42} & \scriptsize{0.32} & \scriptsize{0.38} & \scriptsize{0.40} & \scriptsize{11.9} & \scriptsize{10.8} & \scriptsize{11.1} & \scriptsize{23.5} & \scriptsize{30.7} & \scriptsize{30.6}
& \scriptsize{1.3} & \scriptsize{1.4} & \scriptsize{1.5} & \scriptsize{5.1} & \scriptsize{5.6} & \scriptsize{6.1} & \scriptsize{44.4} & \scriptsize{44.2} & \scriptsize{44.6} & \scriptsize{185} & \scriptsize{184} & \scriptsize{187} \\
\hline

JHO\&APPR
& $\mathcal{I}$
& 0.73 & 0.21 & 0.31 & \underline{0.80} & 0.23 & \underline{0.28} & \textcolor{violet}{1.00} & \textcolor{violet}{1.00} & \textcolor{violet}{1.00} & \textbf{0.13} & \underline{0.11} & \underline{0.14}
& \underline{0.87} & 0.53 & \underline{0.94} & \underline{1.08} & \underline{0.42} & \textbf{0.55} & \textbf{1.00} & \underline{1.00} & \underline{1.00} & \textbf{1.10} & \underline{1.00} & \textbf{1.00} \\
\timerow & \scriptsize{Time}
& \scriptsize{0.35 } & \scriptsize{0.37 } & \scriptsize{0.36 } & \scriptsize{0.21 } & \scriptsize{0.20 } & \scriptsize{0.22 } & \scriptsize{4.2 } & \scriptsize{4.3 } & \scriptsize{4.2 } & \scriptsize{37.8 } & \scriptsize{37.5 } & \scriptsize{36.1
} & \scriptsize{0.26 } & \scriptsize{0.26 } & \scriptsize{0.27 } & \scriptsize{1.0 } & \scriptsize{1.1 } & \scriptsize{1.2 } & \scriptsize{15.8 } & \scriptsize{16.1 } & \scriptsize{16.0 } & \scriptsize{101 } & \scriptsize{105 } & \scriptsize{109} \\
\hline

SEDA
& $\mathcal{I}$
& \textbf{0.26 }& \textbf{0.03} & \textbf{0.05} & \underline{0.80} & \underline{0.17} & 0.55 & \underline{1.00} & \textcolor{violet}{1.00} & \textbf{1.0} & 0.21 & \underline{0.11} & 0.40
& 1.00 & 0.89 & 1.4 & 1.37 & 1.62 & 3.13 & \textcolor{violet}{1.00} & \textcolor{violet}{1.00} & \textcolor{violet}{1.00} & \textbf{1.10} & \textcolor{violet}{1.20} & \textcolor{violet}{1.10} \\
\timerow & \scriptsize{Time}
& \scriptsize{12.1 } & \scriptsize{10.5 } & \scriptsize{12.4 } & \scriptsize{8.6 } & \scriptsize{7.5 } & \scriptsize{8.1 } & \scriptsize{306 } & \scriptsize{2158 } & \scriptsize{860 } & \scriptsize{1036 } & \scriptsize{926 } & \scriptsize{980
} & \scriptsize{13.7 } & \scriptsize{14.5 } & \scriptsize{14.4 } & \scriptsize{49.6 } & \scriptsize{56.2 } & \scriptsize{52.1 } & \scriptsize{242 } & \scriptsize{245 } & \scriptsize{190 } & \scriptsize{1102 } & \scriptsize{1269 } & \scriptsize{1129} \\
\hline

EchoGAE
& $\mathcal{I}$
& 0.61 & 0.13 & 0.31 & 0.94 & 0.79 & 0.81 & \textcolor{violet}{1.00} & \textcolor{violet}{1.00} & \textcolor{violet}{1.00} & 0.71 & \underline{0.11} & 0.81
& \underline{0.87} & \underline{0.50} & 1.16 & \textcolor{violet}{2.80} & \textbf{0.34} & \underline{0.87} & \textcolor{violet}{1.00} & \textcolor{violet}{1.00} & \textcolor{violet}{1.00} & \textcolor{violet}{1.10} & \textcolor{violet}{1.20} & \textcolor{violet}{1.10} \\
\timerow & \scriptsize{Time}
& \scriptsize{3.5 } & \scriptsize{3.5 } & \scriptsize{3.4 } & \scriptsize{13.4 } & \scriptsize{13.3 } & \scriptsize{13.2 } & \scriptsize{26.6 } & \scriptsize{26.7 } & \scriptsize{27.0 } & \scriptsize{256 } & \scriptsize{253 } & \scriptsize{249
} & \scriptsize{5.2 } & \scriptsize{5.2 } & \scriptsize{5.2 } & \scriptsize{19.6 } & \scriptsize{19.6 } & \scriptsize{19.3 } & \scriptsize{655 } & \scriptsize{563 } & \scriptsize{564 } & \scriptsize{1919 } & \scriptsize{1904 } & \scriptsize{1920} \\
\hline

Q-Peeling
& $\mathcal{I}$
& 0.91 & 0.66 & 0.82 & \textcolor{violet}{1.00} & \textcolor{violet}{1.00} & \textcolor{violet}{1.00} &
\textcolor{violet}{1.00} & \textcolor{violet}{1.00} & \textcolor{violet}{1.00} &
\textcolor{violet}{1.00} & \textcolor{violet}{1.00} & \textcolor{violet}{1.00}
& \textcolor{violet}{1.00} & \textcolor{violet}{1.00} & \textcolor{violet}{1.40}
& \textcolor{violet}{2.80} & \textcolor{violet}{3.00} & \textcolor{violet}{3.30}
& \textcolor{violet}{1.00} & \textcolor{violet}{1.00} & \textcolor{violet}{1.00}
& \textcolor{violet}{1.10} & \textcolor{violet}{1.20} & \textcolor{violet}{1.10} \\
\timerow & \scriptsize{Time}
& \scriptsize{6.5} & \scriptsize{6.2 } & \scriptsize{6.2 } & \scriptsize{5.4 } & \scriptsize{2.5 } & \scriptsize{2.5 } & \scriptsize{142.4 } & \scriptsize{39.2 } & \scriptsize{39.6 } & \scriptsize{595 } & \scriptsize{240 } & \scriptsize{231
} & \scriptsize{4.1 } & \scriptsize{4.0 } & \scriptsize{3.9 } & \scriptsize{13.8 } & \scriptsize{13.4 } & \scriptsize{13.0 } & \scriptsize{54.3 } & \scriptsize{54.1 } & \scriptsize{52.2 } & \scriptsize{266 } & \scriptsize{265 } & \scriptsize{260} \\
\hline
\end{tabular}
\end{table*}

In addition, results for the labeled datasets (Reddit-RW and Twitter-RW) are presented in Figure~\ref{fig:RealWorldSensitivity}. On Reddit-RW, only JECHO and SEDA identify echo chambers under limited threshold settings. On Twitter-RW, JECHO and JHO\&APPR detect echo chambers under a broader range, whereas the other methods work only under relaxed configurations with low opinion extremization. These results highlight the difficulty of the task on labeled real-world data and further demonstrate JECHO’s robustness.

\spara{Detection Quality.}
Across different datasets (Table~\ref{tab:result_combined}), JECHO achieves the
lowest structural isolation values in most cases, indicating more
sharply isolated echo chambers.
For example, on Twitter with extreme opinions, JECHO attains $\mathcal{I}=0.1$, compared with $0.4$ for SEDA, whereas EchoGAE fails to return a valid result.
JECHO also exhibited the lowest failure rate, and JHO\&APPR was typically the second-best.
On the Facebook network, which has high modularity ($0.84$), community-based methods such as SEDA and EchoGAE perform reasonably well, reflecting that modular networks are favorable to clustering-based approaches.
However, on less modular networks, such as Git and Pokec, these baselines frequently fail to meet the ECD constraints, whereas JECHO continues to detect valid echo chambers with low structural isolation.

\spara{Scalability.}
All experiments were conducted within a 12-hour time limit per configuration.
JECHO successfully completed all real-world experiments, including Pokec with $1.63$ million nodes. In contrast, SEDA timed out on Pokec. Large-scale datasets also impose memory limitations, which we mitigated using a memory-aware implementation.
Although all methods have polynomial worst-case complexity, SEDA and EchoGAE operate on the full graph in each iteration, whereas JECHO performs only $k_{\max}\ll n$ local expansions determined by the seed ratio.
This locality yields a substantially lower effective runtime on large graphs.

\begin{figure}[t]
    \centering
    \includegraphics[width=0.75\linewidth]{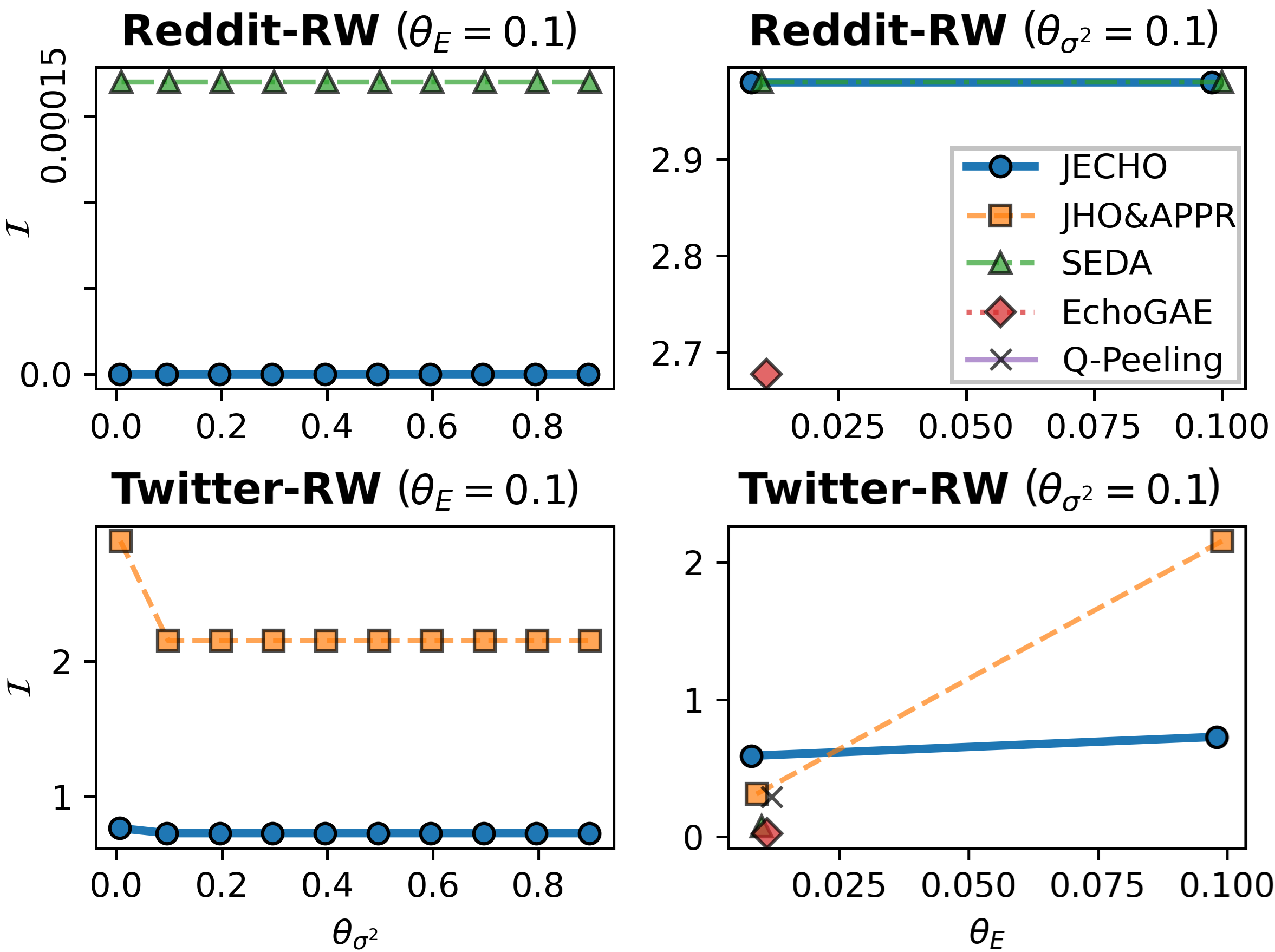}
    \caption{Results on labeled real-world datasets. In the left column, $\theta_{\sigma^2}$ varies while $\theta_E$ is fixed at 0.1. In the right column, $\theta_E$ varies while $\theta_{\sigma^2}$ is fixed at 0.1.}
    \Description{Results on the Reddit-RW and Twitter-RW labeled datasets under varying echo chamber detection thresholds. The left column varies the opinion homogeneity threshold while fixing the extremism threshold, and the right column varies the extremism threshold while fixing the homogeneity threshold. The plots compare the structural isolation achieved by JECHO and the baseline methods across these settings.}
    \label{fig:RealWorldSensitivity}
\end{figure}

\spara{Success Rate.}
Over 10 independent runs, JECHO returned valid echo chambers in $\mathrm{94}\%$ of cases, compared with $\mathrm{70}\%$ for JHO\&APPR, $\mathrm{50}\%$ for SEDA, and $\mathrm{29}\%$ for EchoGAE. In addition, Q-Peeling succeeded in $\mathrm{3}\%$ of runs.
Failures are most common in networks with low modularity or diffuse opinion distributions, where the global community structure is weak.
The success rate of JECHO compared with other models demonstrates substantially greater reliability in practical deployment.

\subsection{Ablation and Case Studies}

We first perform an ablation study to isolate the effects of JECHO’s JHO-based seed selection and SSE-based expansion by replacing each component individually while keeping the remainder of the pipeline fixed. This allows us to assess the contribution of each design choice to both detection quality and efficiency. We then present controlled and qualitative case studies to further illustrate the behavior of JECHO under varying structural conditions and to provide intuitive insights into the detected echo chambers.

\subsubsection{Seed Selection Ablation Study}
We first fix the expansion strategy to SSE and vary only the seed selection function. Figure~\ref{fig:score_seed_comparison} compares different seed scoring methods on the LastFM and SBM\mbox{-}10K dataset, including degree-, homophily-, entropy-, and AttriRank-based selection. In most cases, JECHO seed scoring yields lower structural isolation values. These findings empirically support the theoretical basis for seed selection in Section~\ref{ssec:seed-theory}, which predicts that nodes with high JHO scores are more likely to belong to echo chambers.
Additional results are reported in Appendix~\ref{apx:additional_results}.

\begin{figure}[!htpb]
 \centering
 \includegraphics[width=0.9\linewidth]{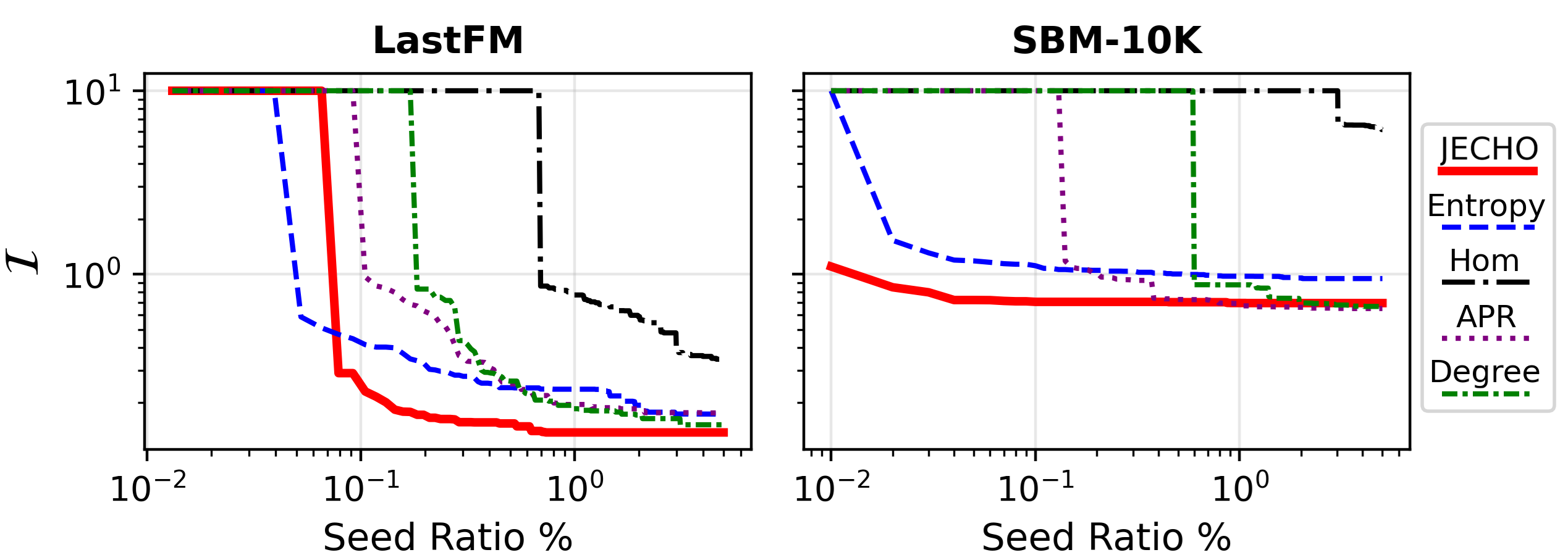}
 \caption{Seed scoring ablation on the LastFM and SBM\mbox{-}10K datasets. The structural isolation obtained by SSE is shown as a function of the seed ratio for different seed scoring functions. Lower values indicate stronger echo chamber quality.}
 \Description{Line plot showing structural isolation versus seed ratio for different seed scoring functions, including JHO, entropy, homophily, degree, and AttriRank.}
\label{fig:score_seed_comparison}
\end{figure}

\subsubsection{Seed Expansion Ablation Study}
Next, we fix the seed selection method and compare the seed expansion part. Figure~\ref{fig:expansion_score_comparison} compares SSE (used in JECHO) and AttriPPR on the LastFM dataset under different node scoring functions. In most cases, SSE achieves stronger structural isolation, while AttriPPR is faster, consistent with Table~\ref{tab:result_combined}. Overall, SSE offers a better balance between isolation quality and efficiency and remains feasible on larger graphs. Additional results are provided in Appendix~\ref{apx:additional_results}.

\begin{figure}[t]
 \centering
 \includegraphics[width=0.9\linewidth]{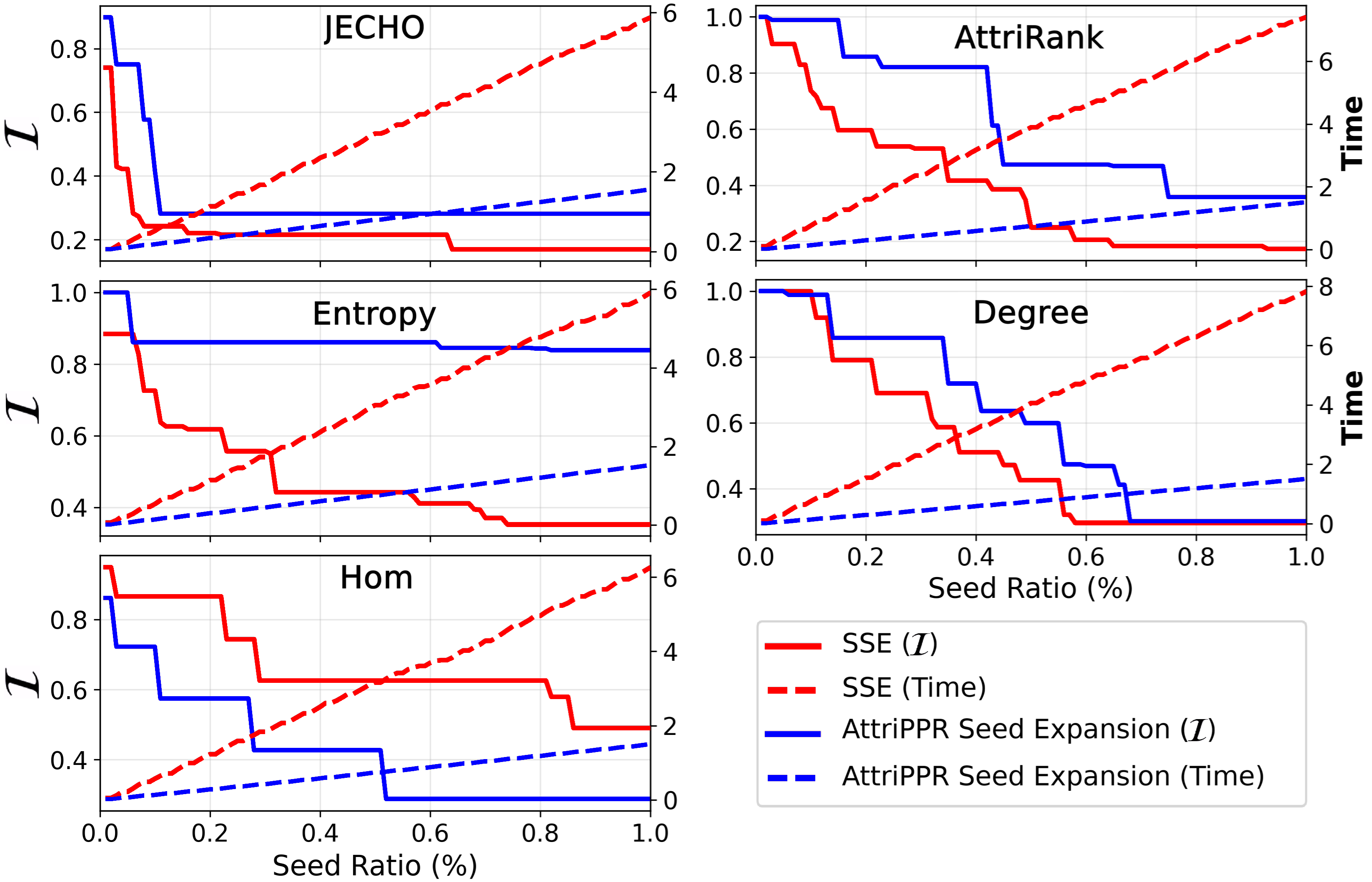}
 \caption{Comparison of seed expansion strategies on the LastFM dataset. Solid lines denote structural isolation $(\mathcal{I})$, while dashed lines indicate runtime, for SSE and AttriPPR under different seed scoring functions.}
 \Description{Plots comparing SSE and AttriPPR seed expansion, showing structural isolation with solid lines and runtime with dashed lines under different seed scoring functions.}
\label{fig:expansion_score_comparison}
\end{figure}

\subsubsection{Case Study on Modularity Effects}
\label{sec:ModularityCase}
This case study examines the effect of graph modularity on ECD performance using an SBM with $20\text{K}$ nodes, where the modularity parameter is varied in $[0.1, 0.9]$. As shown in Figure~\ref{fig:modularity_case_Study}, increasing modularity generally simplifies the ECD task for all methods by strengthening community structure.
Higher modularity consistently improves structural isolation across all methods. At low modularity levels, JECHO achieves the strongest isolation with the lowest runtime, followed by JHO\&APPR, demonstrating its effectiveness in detecting subtle echo chambers across different modularity levels.

\begin{figure}[t]
 \centering
 \includegraphics[width=0.95\linewidth]{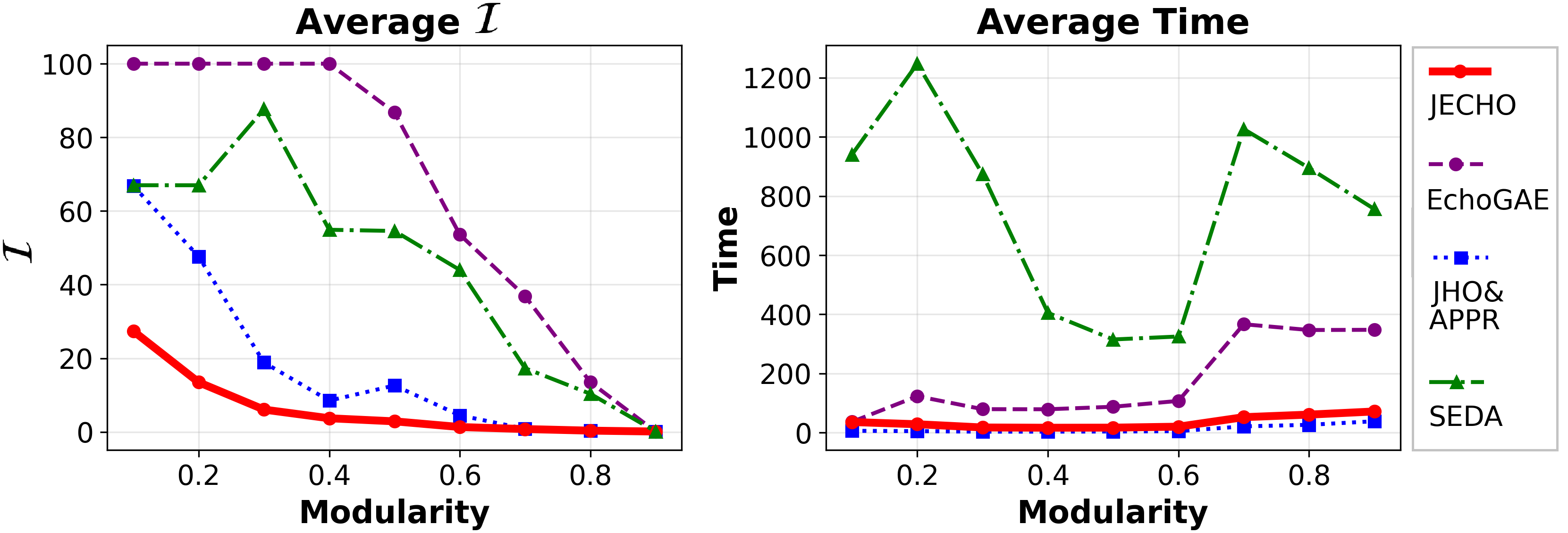}
 \caption{Performance comparison of different ECD methods across varying modularity levels in SBM-generated networks.}
 \Description{Plots showing average structural isolation and average runtime of different ECD models across increasing modularity levels in SBM-generated networks.}
\label{fig:modularity_case_Study}
\end{figure}

\subsubsection{Sensitivity to Metrics and Thresholds}

To assess the robustness of JECHO with respect to both modeling choices and parameter settings, we conduct a sensitivity analysis over alternative metric formulations and threshold configurations.
For \textbf{metric sensitivity}, we adopt a controlled replacement strategy in which each component is substituted with a standard alternative while keeping the remaining components fixed. Specifically, structural isolation is alternatively measured using conductance,
$
\frac{|\partial(S, V \setminus S)|}{|\partial(S, V)|},
$
and opinion homogeneity is measured using mean absolute deviation,
$
\frac{1}{|S|} \sum_{v \in S} |o_v - \mathbb{E}(\mathbf{o}, S)|,
$
where $S \subseteq V$ denotes a candidate echo chamber.
For \textbf{threshold sensitivity}, we vary the thresholds $\theta_{\sigma^2}$ and $\theta_E$ over a broad range, capturing different levels of strictness in what is identified as an echo chamber.

Results in Figure~\ref{fig:SensetivityStudy} exhibit largely similar qualitative patterns across both alternative metric formulations and the original formulation (ours). As the thresholds vary, the overall trends remain stable. This suggests that the effectiveness of JECHO is not tied to a specific choice of homogeneity or isolation measure, but rather to the underlying principles captured by their joint modeling.

\begin{figure}[!htbp]
    \centering
    \includegraphics[width=0.85\linewidth]{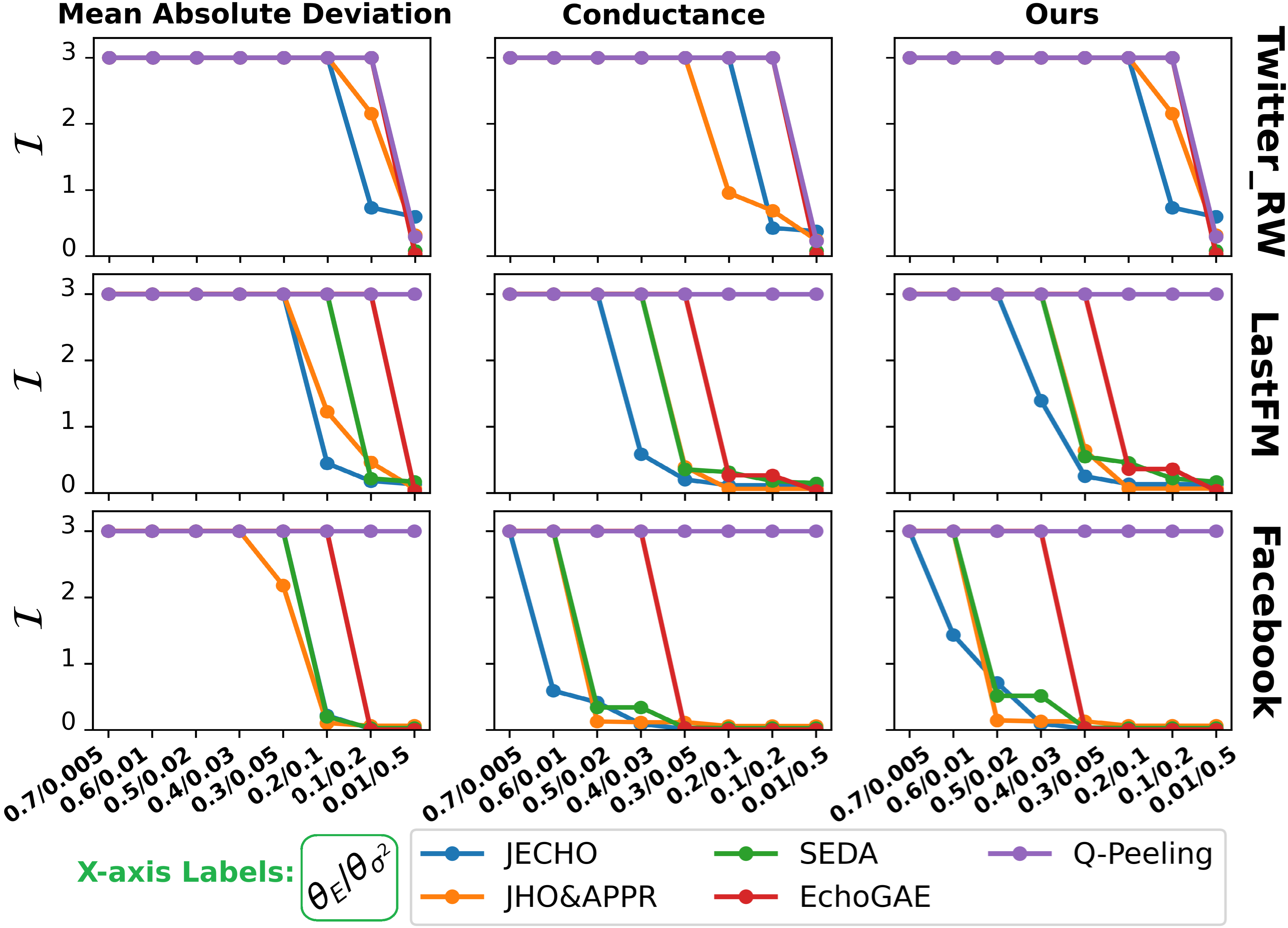}
    \caption{Sensitivity analysis. Columns show alternative homogeneity, alternative structural isolation, and the original formulation. Thresholds progress from stricter to more relaxed settings, illustrating method performance across threshold changes. As shown in the hint, the x-axis labels represent combinations of $\theta_{\sigma^2}$ and $\theta_E$.}
    \Description{Sensitivity analysis across Twitter-RW, LastFM, and Facebook. Columns correspond to alternative opinion homogeneity using mean absolute deviation, alternative structural isolation using conductance, and the original formulation. Each plot compares JECHO and the baseline methods as combinations of the homogeneity and extremism thresholds progress from stricter to more relaxed settings.}
    \label{fig:SensetivityStudy}
\end{figure}

\subsubsection{Qualitative Visualization on Real-World Data}

To complement the quantitative and controlled case studies, we provide a qualitative visualization of a detected echo chamber in a real-world network. As shown in Figure~\ref{fig:qualitative_detection}, the identified subgraph exhibits strong opinion alignment (node colors) and clear structural isolation, which illustrate the key properties captured by our formulation.

\begin{figure}[!htpb]
 \centering
 \includegraphics[width=0.7\linewidth]{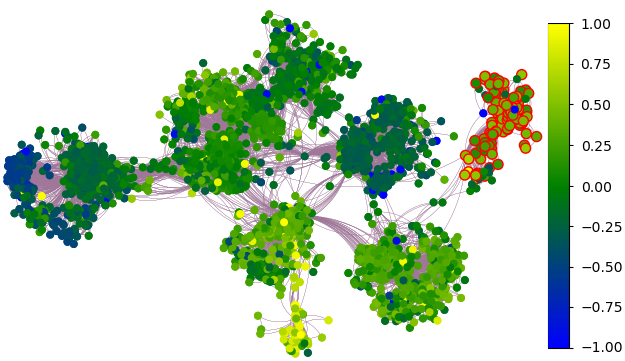}
 \caption{An echo chamber detected in the Facebook dataset under filtering-based opinion initialization. Node colors indicate opinion in $[-1,1]$ (blue: negative, yellow: positive); red boundaries mark echo chamber nodes.}
 \Description{Network visualization of an echo chamber detected in the Facebook dataset, where node color encodes opinion from negative to positive and red boundaries indicate nodes belonging to the detected echo chamber. }
\label{fig:qualitative_detection}
\end{figure}

\subsubsection{Illustrative Toy Case Studies}
Illustrative toy case studies are provided to highlight qualitative differences between models.

\spara{Case 1: Embedded Extreme Cluster.}
A graph with 100 nodes is constructed, containing a small dense cluster of 20 extreme-opinion nodes ($[0.5,1.0]$) weakly connected to 80 moderate-opinion nodes ($[-0.25,0.25]$).
The extreme cluster is internally dense and weakly connected to the moderate cluster.
The goal is to recover the small extreme group despite the surrounding moderate structure.
SEDA fails to detect a valid echo chamber, while JECHO and JHO\&APPR successfully identify the extreme cluster.

\spara{Case 2: Boundary Trade-off in a Star Graph.} 
A star-shaped graph with seven arms and a smooth opinion gradient from the center ($+1$) to the periphery ($-1$) is considered.
This setting highlights the trade-off between opinion extremism and structural isolation.
Although EchoGAE detects an echo chamber, its isolation quality is poorer.
JECHO and JHO\&APPR recover more meaningful boundaries by balancing opinion alignment and structural separation.
Figure~\ref{fig:toy_cases} illustrates both cases.

\begin{figure}[!htpb]
 \centering
 \includegraphics[width=0.9\linewidth]{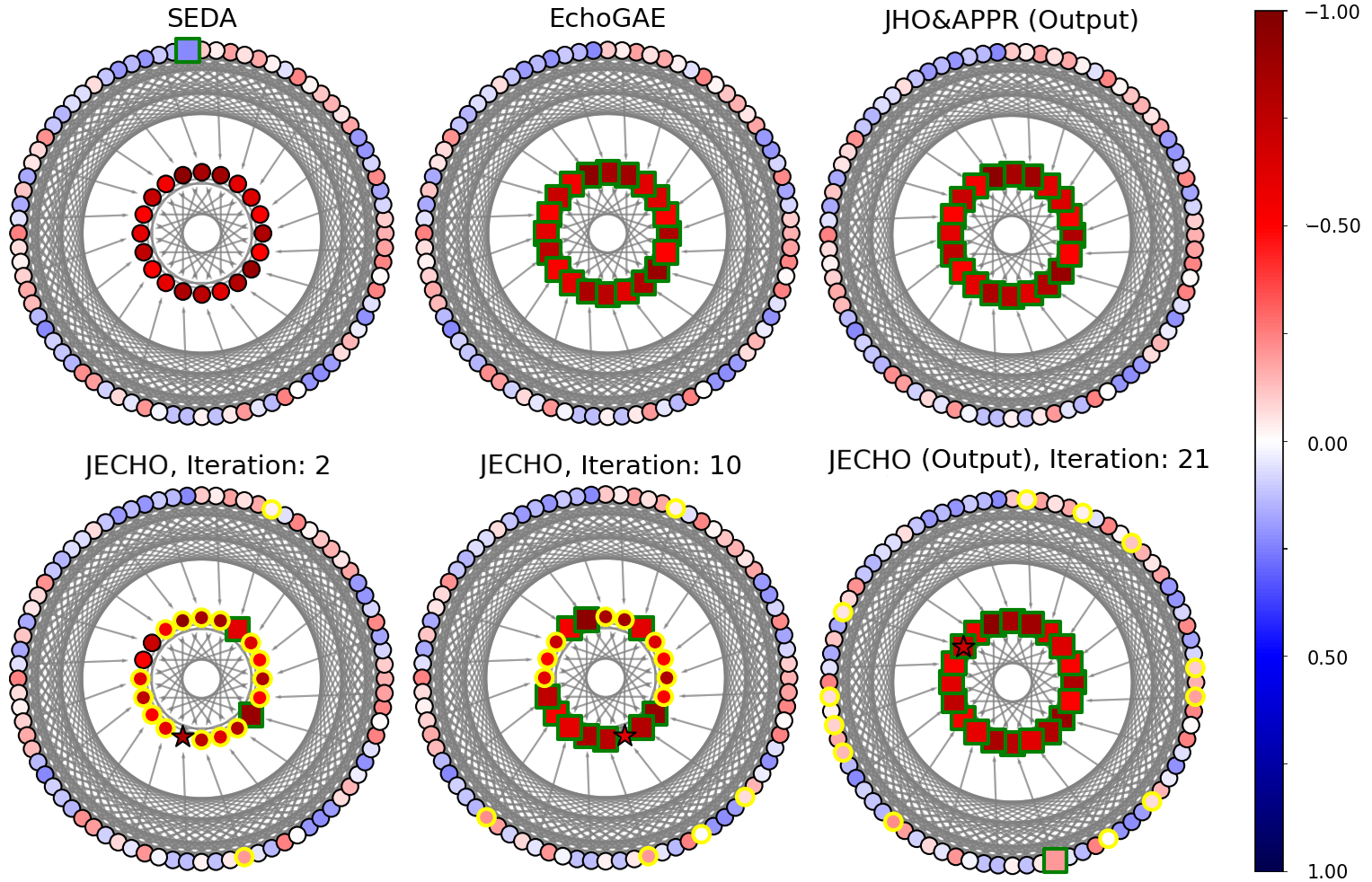}\\
 \vspace{0.50em}
 \includegraphics[width=0.9\linewidth]{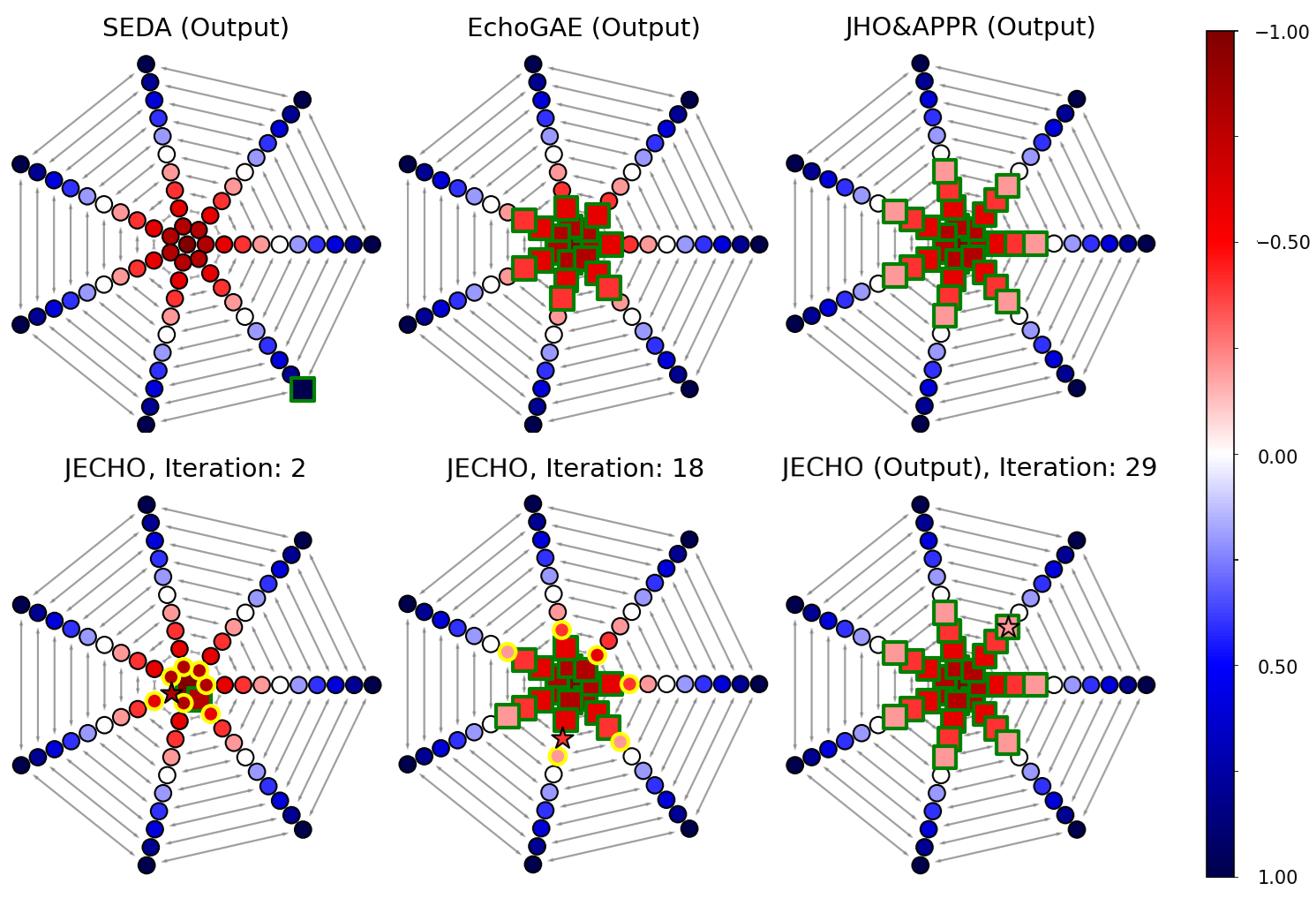}
 \caption{
 Qualitative case study.
Top: Case~1 (embedded extreme cluster).
Bottom: Case~2 (star graph with opinion gradient).
The first row shows the baseline outputs, and the subsequent rows illustrate JECHO’s iterative expansion.
The node color indicates the opinion.
Green square: selected node; star: current expansion node; yellow border: candidate nodes.}
\Description{Qualitative case studies illustrating two synthetic graphs: an embedded extreme cluster and a star graph with opinion gradient. Rows show baseline outputs and successive JECHO expansion steps, with node color indicating opinion and markers highlighting selected and candidate nodes.}
 \label{fig:toy_cases}
\end{figure}

\section{Conclusions, Limitations, and Future Work}
\label{s:conclusion}
In this work, we addressed the lack of algorithmic rigor in echo chamber detection (ECD) by establishing a unified definition that enforces internal opinion homogeneity, opinion extremism, and structural isolation. We proved that our ECD problem is NP-hard, revealing a fundamental barrier to exact computation. 
To overcome this, we derived a theoretical lower bound on the Jaccard-based homophily of nodes within opinion-aligned and internally cohesive echo chambers.
This theoretical insight grounded the design of JECHO, a local expansion framework that circumvents global graph partitioning.
Our extensive evaluation on datasets scaling to millions of nodes shows that JECHO reduces runtime by orders of magnitude compared to community-detection baselines while identifying more sharply isolated groups. Moreover, JECHO remains effective in low-modularity settings where traditional methods often fail.

\smallskip
\noindent \textbf{Limitations}. Despite its strengths, our proposal has limitations. First, our model relies on scalar opinion values, which simplifies complex ideological positions into a single dimension. While effective for studying polarization, this does not capture the multifaceted nature of belief systems in rich text data. 
Second, JECHO is designed for static network snapshots and does not model the temporal dynamics of echo chamber formation or dissolution.

\smallskip
\noindent \textbf{Future Work.} List of main directions for future investigation:
\begin{squishlist}
    \item Temporal Dynamics: Extending JECHO to dynamic graphs to track the lifecycle (formation, solidification, and dissolution) of echo chambers over time.
    
    \item Rich Content Integration: Incorporating semantic embeddings (e.g., from LLMs) alongside scalar opinion values to detect ``topic-specific'' echo chambers.
 
    \item Multidimensional Ideological Space: Incorporating \emph{ideological embeddings} (e.g., \cite{Monti0AB21}) to detect echo chambers in a multifaceted, multidimensional ideological space.
  
    \item Mitigation Strategies: Targeting the identified boundary nodes from the Seed Expansion phase for intervention strategies, such as bridge-building recommendations to reduce polarization.

    \item Theoretical Guarantees: Developing approximation guarantees and performance bounds for the proposed framework.
\end{squishlist}

\appendix

\section{JECHO Time Complexity Analysis}
\label{apdx:timecomplexity}

For a node $w$, \textit{score computation} requires iterating over its out-neighbors: 
$O\!\left(
\sum_{v\in\Gamma^+(w)} T\bigl(\mathrm{JI}(w,v)\bigr)
\right).$
If Jaccard indices are stored on edges, this reduces to
$O(\deg^+(w))$, and the total score computation cost over all nodes
is $T_{\mathrm{score}}=O(m)$.
If node scores are precomputed and the seed bank is maintained as a max-priority queue, seed selection based on Algorithm~\ref{alg:seed_selection} runs in
$T_{\mathrm{selection}}= O(n\log n + k_{\max}\cdot T_{\mathrm{SSE}})$.
Assuming $\texttt{JHO}(v)$ is an $O(1)$ lookup as scores are computed earlier, the SSE
(Algorithm~\ref{alg:expansion}) runs in $T_{\mathrm{SSE}} = O((m+n)\log n)$.
Finally, \textit{JECHO} runs in
$
T_{\mathrm{JHO}}=
T_{\mathrm{score}} + T_{\mathrm{selection}}=
O\!\left(m\right) + O\!\left(n\log n + k_{\max}\cdot T_{\mathrm{SSE}}\right)=
O\!\left( m + n\log n + k_{\max}\cdot (m+n)\log n  \right)
=
O\!\left(
k_{\max}\cdot (m+n)\log n
\right).
$

\section{Baseline Adaptations}
\label{APX:baselines}

Two state-of-the-art ECD methods were adapted to our opinion-only setting.
\emph{SEDA} was implemented using a signed adjacency matrix $W_{ij}=o_i o_j$, with agreement and disagreement matrices $A=\max(0,W)$ and $D=\min(0,W)$. 
It iteratively reassigned nodes to maximize signed modularity, rewarding internal agreement and penalizing external disagreement. Communities with positive SE scores were treated as echo chambers.
\emph{EchoGAE} was adapted by first detecting communities using Louvain clustering. 
It then computed, for each community, a cohesion--separation score from within- and across-community opinion distances. The highest-scoring community was selected as the echo chamber. Embedding components were omitted because only opinion values were available.

\section{Hyperparameters and Thresholds}
\label{APX:setup}

For SBM graphs with $n$ nodes, the number of blocks is set to $\lceil \sqrt{n} \rceil$ with target modularity $0.7$, and the number of opinion partitions is set to $2.5$ times the number of blocks. For real-world graphs, the number of partitions is set to $2\lceil \sqrt{n} \rceil$. Partition means are sampled from $\mathcal{N}(0,0.2)$ for SBMs and $\mathcal{N}(0,0.3)$ for real networks. Five filtering iterations with five sampled neighbors per node are used. The extreme-node and noise ratios are set to $1\%$ and $0.03$, respectively. Detection thresholds are $\theta_{\sigma^2}=0.05$ and $\theta_E=0.3$ for SBMs, and $\theta_{\sigma^2}=0.075$ and $\theta_E=0.3$ for real networks. The seed ratio is set to $0.005$ for SBMs and $0.001$ for real graphs, with an extreme pre-filtering ratio of $0.1$ and a minimum echo chamber size of 10.

\section{Bounded-Heap Optimization for Efficient SSE}
\label{APX:heapsizebound}

To improve scalability, the SSE implementation uses a bounded priority queue of maximum size $K$ (set to 500), storing only the top-$K$ highest-scoring candidate nodes. A new candidate is inserted if the heap is not full; otherwise, it replaces the current minimum only if its score is higher.
This optimization preserves the SSE ranking logic while substantially reducing memory usage and runtime by avoiding large candidate sets. It is purely an implementation-level optimization for efficient execution on large graphs.

\section{Different Score Computation Models} 
\label{apx:additional_results}
Figures~\ref{fig:comparision-expansions} and~\ref{fig:score_seed_comparison_all2} compare seed expansion and seed scoring strategies across different datasets.

\begin{figure}[!htbp]
 \centering
 \includegraphics[width=0.99\linewidth]{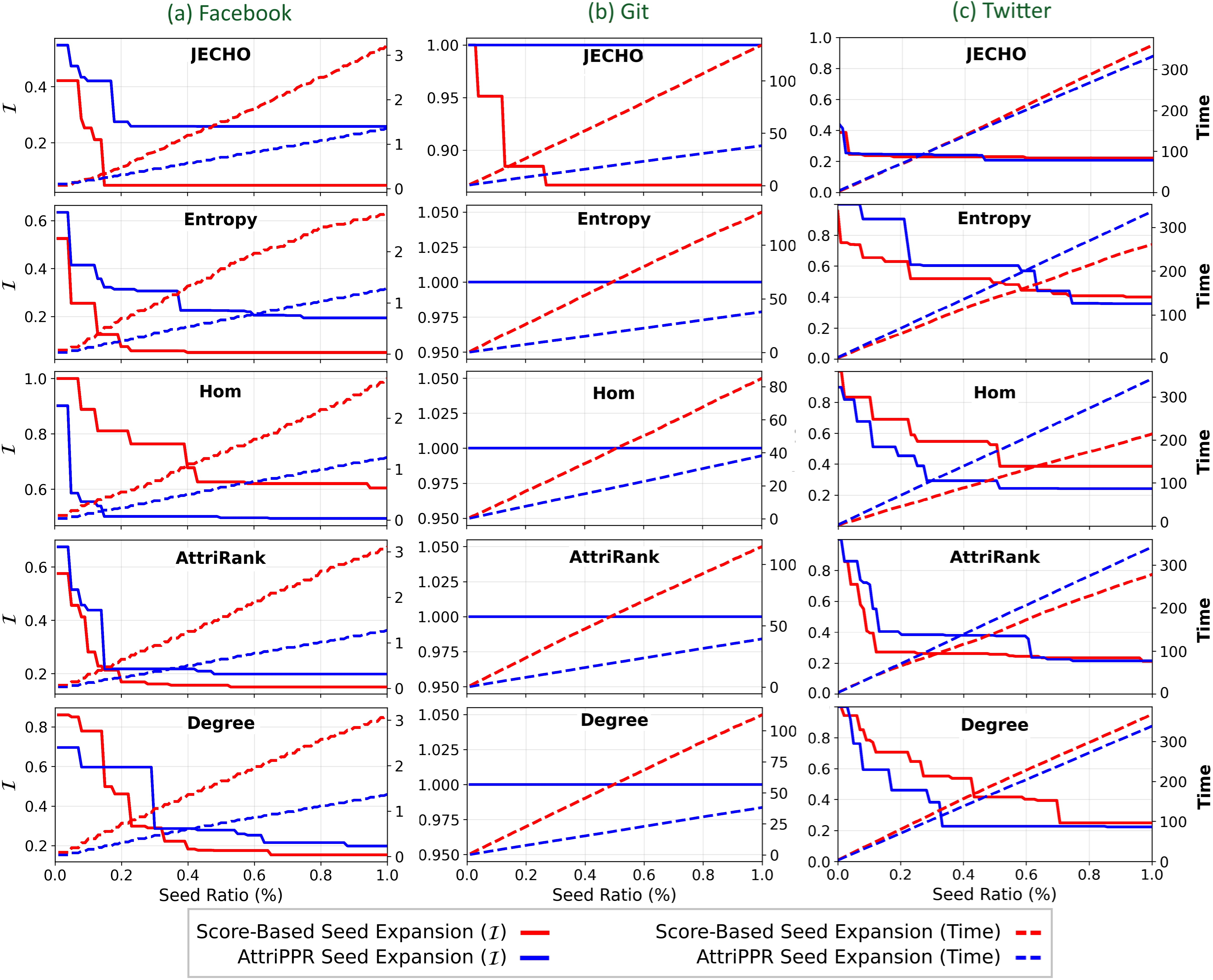}
 \caption{Comparison of SSE and AttriPPR expansion, showing structural isolation (solid) and runtime (dashed).}
 \Description{Plots comparing SSE and AttriPPR seed expansion on different datasets, showing structural isolation with solid lines and runtime with dashed lines under different seed scoring functions.}
 \label{fig:comparision-expansions}
\end{figure}

\begin{figure}[!htbp]
 \centering
 \includegraphics[width=0.95\linewidth]{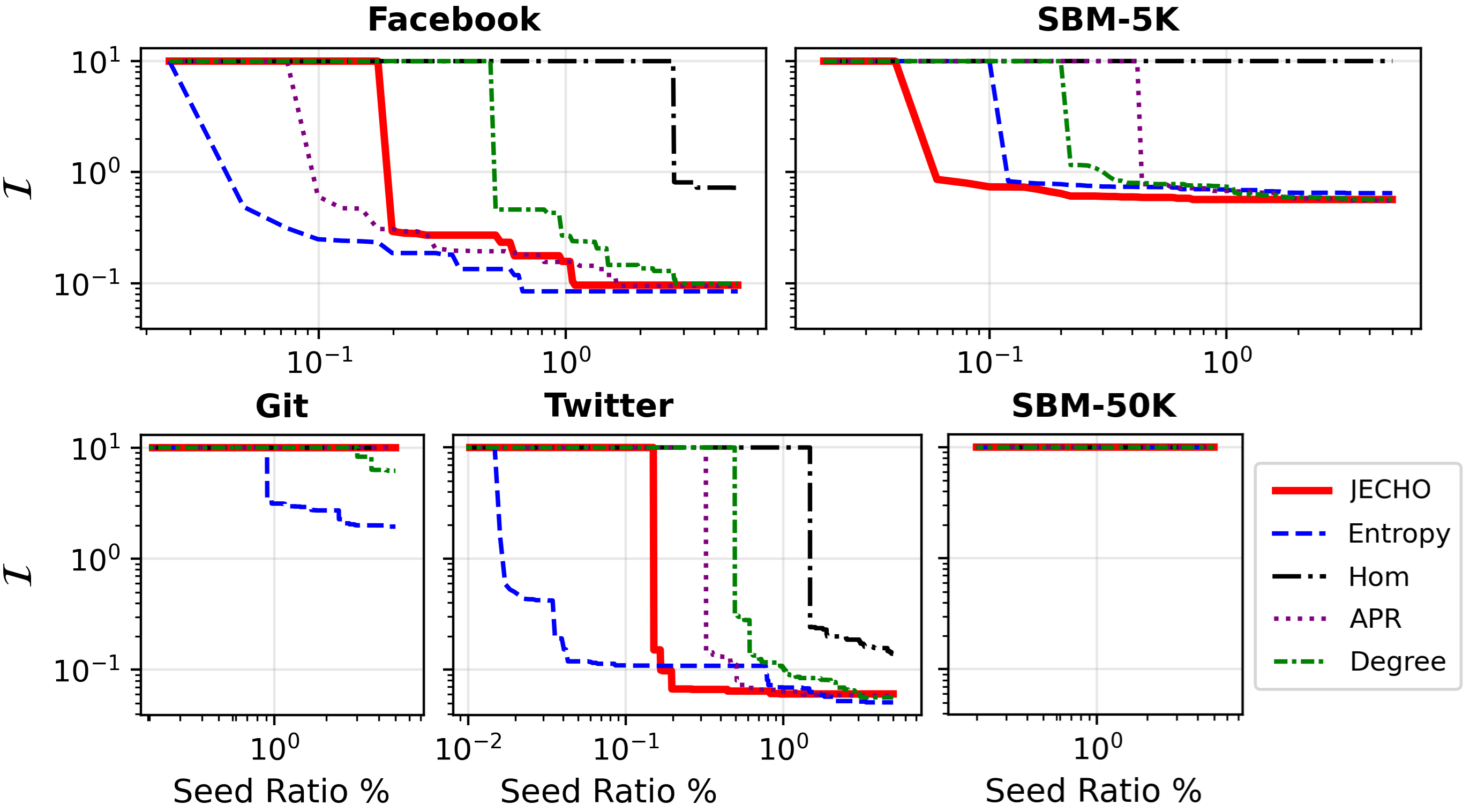}
 \caption{Comparison of seed scoring functions. Each plot shows $\mathcal{I}$ under different seed ratios using SSE.}
 \Description{Line plot showing structural isolation versus seed ratio on different datasets for different seed scoring functions, including JHO, entropy, homophily, degree, and AttriRank.}
\label{fig:score_seed_comparison_all2}
\end{figure}

\section*{Generative AI Usage Disclosure}
We used ChatGPT solely for language polishing and clarity improvement of text written by the authors. No scientific content, analysis, or results were generated using AI tools.

\bibliographystyle{ACM-Reference-Format}
\bibliography{sample-base}

\end{document}